\documentclass[11pt]{amsart}

\usepackage{amsmath,amssymb,amsthm,amsfonts,mathrsfs}
\usepackage[authoryear,round]{natbib}
\usepackage[english]{babel}
\usepackage[utf8]{inputenc}      
\usepackage{graphicx}
\usepackage{pdfpages}
\usepackage{setspace}
\usepackage{booktabs}
\usepackage{tabularx}
\usepackage{algpseudocode}
\usepackage{threeparttable}
\usepackage[margin=1in]{geometry}
\usepackage{tikz}
\usetikzlibrary{matrix,calc,arrows.meta,positioning}

\renewcommand{\P}{\mathbf{P}}

\providecommand{\Z}{\mathbf{Z}}

\newcommand{\y}{\mathbf{y}}
\newcommand{\z}{\mathbf{z}}
\let\origc\c
\DeclareRobustCommand\c{\ifmmode\mathbf{c}\else\expandafter\origc\fi}
\let\origd\d
\DeclareRobustCommand\d{\ifmmode\mathbf{d}\else\expandafter\origd\fi}
\let\origu\u
\DeclareRobustCommand\u{\ifmmode\mathbf{u}\else\expandafter\origu\fi}
\let\origd\v
\DeclareRobustCommand\v{\ifmmode\mathbf{v}\else\expandafter\origv\fi}

\providecommand{\ba}{\boldsymbol{\alpha}}

\providecommand{\bg}{\boldsymbol{\gamma}}

\renewcommand{\Pr}{\mathbb{P}}

\providecommand{\Tr}{^{\scriptscriptstyle\top}}

\usepackage[scr=boondox]{mathalpha}

\makeatletter
\@ifpackagelater{mathalpha}{2021/01/01}{%

}{%

}
\makeatother

\providecommand{\one}{\mathbf{1}}
\providecommand{\zero}{\mathbf{0}}

\providecommand{\norm}[1]{\lVert#1\rVert}

\theoremstyle{definition}

\newtheorem{assumption}{Assumption}
\newtheorem{proposition}{Proposition}[section]

\newtheorem{lemma}{Lemma}[section]

\newtheorem{theorem}{Theorem}[section]

\title[TEAM-IV]
{A Novel Approach to Instrumental Variable Estimation: TEAM-IV}

\begin{document}

\author[S. Alberding]{Steven Y. Alberding}

\address{
Department of Biostatistics,
University of Iowa,
Iowa City, Iowa,
USA
}

\email{steven-alberding@uiowa.edu}

 \author[P. Breheny]{Patrick Breheny}
 
 \address{
Department of Biostatistics,
University of Iowa,
Iowa City, Iowa,
USA
}

\email{patrick-breheny@uiowa.edu}
\begin{abstract}
Instrumental-variable (IV) analyses can be undermined when some instruments violate the exclusion restriction through direct effects on the outcome. Existing robust IV methods, including sisVIVE and CIIV, rely on majority- or plurality-validity conditions. We propose TEAM-IV, which targets joint validity by identifying sets of instruments that appear valid together and aggregating them, allowing reliable estimation even when only a small number of candidate instruments are valid (potentially as few as two). TEAM-IV fits MCP-penalized models of the outcome on projected exposure over overlapping instrument subsets. Instruments whose estimated direct effects are frequently shrunk to zero together (suggesting concordant validity status) form candidate “teams” (instrument sets). Teams are ranked by their out-of-sample predictive performance; near-best teams are unioned and used as the instrument set for the IV estimate. Across simulations varying the number, magnitude, and signs of invalid direct effects, instrument correlation, and exposure endogeneity, TEAM-IV generally attains lower absolute estimation error than sisVIVE and CIIV. Relative gains are largest when invalid instruments are prevalent. TEAM-IV is demonstrated in an empirical Mendelian randomization application using the Multi-Ethnic Study of Atherosclerosis (MESA) concerning the effect of LDL cholesterol on carotid intima--media thickness. Two additional MESA-anchored validation studies demonstrate differences between TEAM-IV and sisVIVE under challenging invalid-instrument configurations.
\end{abstract}

\setstretch{1.075}
\maketitle

\pagestyle{headings}



\section{Introduction}\label{intro}
\subsection{Instrumental Variable Estimation and Two-Stage Least-Squares}
\par

\begin{figure}[t]
\centering

\begin{tikzpicture}[
  >=Stealth,
  node distance=22mm and 28mm,
  var/.style={
    circle,
    draw,
    minimum size=9mm,
    inner sep=1pt,
    align=center
  },
  lab/.style={font=\small},
  req/.style={->, line width=0.6pt},
  bad/.style={->, dashed, line width=0.6pt},
]
\node[var] (Z) {$z_j$};
\node[var, right=of Z] (D) {$D$};
\node[var, right=of D] (Y) {$Y$};
\node[var, above=of D] (U) {$U$};

\draw[req] (Z) -- node[lab, above] {relevance} (D);
\draw[req] (D) -- node[lab, above] {effect ($\beta$)} (Y);
\draw[req] (U) -- (D);
\draw[req] (U) -- (Y);

\draw[bad, bend left=25] (Z) to (U);
\node[lab, xshift=-4.75mm, yshift=4.75mm]
  at ($(Z)!0.5!(U)$) {\Huge $\times$};
\node[lab, xshift=-4.75mm, yshift=9.75mm]
  at ($(Z)!0.5!(U)$) {independence};

\draw[bad, bend right=18] (Z) to (Y);
\node[lab, yshift=-7.5mm]
  at ($(Z)!0.5!(Y)$) {\Huge $\times$};
\node[lab, yshift=-10.5mm]
  at ($(Z)!0.5!(Y)$) {exclusion};

\end{tikzpicture}

\caption{Causal structure for a valid instrumental variable.}
\label{fig:iv-causal-structure}
\end{figure}
Instrumental variables (IV) methods are widely used to estimate causal effects in the presence of unmeasured confounding \citep{bowden2018}. A variable $Z$ is a \emph{valid} instrument for an exposure $D$ and outcome $Y$ if it satisfies three conditions:
\begin{enumerate}
\item \emph{relevance}: $Z$ is associated with $D$;
\item \emph{independence}: $Z$ is independent of unmeasured confounders jointly influencing $D$ and $Y$;
\item \emph{exclusion restriction}: $Z$ affects $Y$ only through $D$, with no direct pathway from $Z$ to $Y$.
\end{enumerate}

When these three conditions are satisfied, it is possible to estimate a causal effect $\beta$ even in the presence of a confounder $U$ that influences both the exposure $D$ and the outcome $Y$. A standard and widely used estimator for the linear instrumental-variables model is two-stage least-squares (2SLS), which first projects the exposure onto the column space of the instrumental variables and then fits an ordinary least-squares regression of the outcome on the resulting fitted exposure \citep{wooldridge2010}. 2SLS works because the instrument-driven variation in the exposure, contained in the instrument space, is independent of the confounder under the IV assumptions, whereas the residual variation in the exposure may be confounded.

\subsection{Invalid Instruments}
The exclusion restriction and independence assumptions are largely untestable \citep{hernan2006}. This is problematic for users of instrumental variable estimation because, if 2SLS is applied using one or more instruments that violate either assumption, the resulting estimator $\widehat\beta_\textrm{2SLS}$ may be biased \citep{kang2016}. In particular, violation of the exclusion restriction can occur when an instrument $Z$ has a direct effect on the outcome $Y$ that does not operate through the exposure $D$. In the Mendelian randomization literature, such direct effects are commonly termed horizontal pleiotropic effects.

\subsection{Mendelian Randomization}

A major application of instrumental variable estimation is Mendelian randomization. This approach uses genetic variants (e.g., single nucleotide polymorphisms -- SNPs) as instrumental variables. Because genetic variants are fixed at conception, they cannot be altered by environmental or behavioral factors occurring after conception, making them natural candidates for instruments satisfying the independence assumption \citep{smith2003, chen2008}.

An illustrative example concerns the impact of alcohol intake on blood pressure. Individuals who inherit the null variant of the ALDH2 gene experience facial flushing and nausea after drinking alcohol and consequently tend to consume less alcohol than individuals with the active genotype. A Mendelian randomization meta-analysis by \citet{chen2008}, synthesizing evidence from studies of Japanese participants, found evidence that greater alcohol consumption causally increases both systolic and diastolic blood pressure. Under the IV assumptions described above, this provides evidence that alcohol consumption has a causal effect on blood pressure.

\subsection{Existing Methods and sisVIVE}

An active area of research over the past decade has concerned IV methods that estimate a causal effect of interest in the presence of instruments that violate the exclusion restriction. Proposed methods in this category include MR-Egger regression and two-stage hard thresholding with voting (TSHT) \citep{bowden2015,guo2018}. MR-Egger was developed primarily for summary-data Mendelian randomization, whereas methods designed for individual-level data include sisVIVE and the confidence interval method for selecting valid instrumental variables (CIIV) \citep{kang2016,windmeijer2021}.

sisVIVE resembles a lasso model with both the exposure and each instrument included as predictors of the outcome, but differs in two crucial respects. First, the squared-loss term is not the squared \(L_2\)-norm of the residual itself, but instead the squared \(L_2\)-norm of the residual after projection onto the column space of the instruments. Second, the penalty is applied only to the \(L_1\)-norm of the direct-effect vector and not to the causal-effect parameter.

A main strength of sisVIVE and CIIV is that, under a two-stage additive linear-effects model and additional regularity conditions, they can consistently estimate the causal effect without requiring all candidate instruments to be valid. Sufficient identification conditions require valid instruments to constitute a majority of the candidate instruments for sisVIVE, whereas CIIV can operate under a plurality condition.

\subsection{TEAM-IV Overview}

TEAM-IV differs from sisVIVE by constructing and evaluating multiple candidate valid instrument sets rather than relying on a single penalized fit. First, the instruments are reordered according to a ridge regression estimate of the direct-effect vector. Next, contiguous windows of the ordered instruments are analyzed using separate minimax concave penalty (MCP) regressions. The resulting MCP estimates are aggregated to group the candidate instruments into putatively valid ``teams.'' Each team is then evaluated according to its cross-validated prediction risk. Finally, the union of the near-best teams is taken as the putative valid instrument set, and the causal effect is estimated using these instruments while treating the remaining instruments as invalid.

\subsection{Contributions}
Our paper adds to the prior literature as follows. We introduce a new team-based selection-and-aggregation procedure that constructs and evaluates multiple putative valid instrument sets for estimating a causal effect in the presence of confounding and an unknown number of invalid instruments.

We show two theoretical results. First, that under a simplified symmetric population regime with homogeneous instrument strengths and compound-symmetric instrument covariance, ordering the instruments according to a ridge regression estimate of the direct-effect vector asymptotically places the valid instruments in a contiguous block. Second, we establish that the window-specific MCP estimator asymptotically assigns a common sign-with-tolerance classification to all valid instruments within a window, even when the corresponding direct-effect estimates are nonzero. 

This paper also details results of a simulation study that compares TEAM-IV against sisVIVE, CIIV, and an oracle 2SLS benchmark. We demonstrate that TEAM-IV generally improves upon sisVIVE in simulation settings where invalid instruments constitute a majority of the candidate instruments, while maintaining comparable performance when a majority are valid. Lastly, we illustrate the practical applicability of TEAM-IV through a real-data Mendelian randomization analysis of the causal effects of LDL cholesterol and HDL cholesterol on carotid far-wall intima-media thickness (IMT), a surrogate marker of cardiovascular disease, using data from the Multi-Ethnic Study of Atherosclerosis (MESA) \citep{bild2002}.

\section{TEAM-IV Method}\label{model}

Consider independent observations \(\{(y_i,d_i,\mathbf z_{i\cdot}^\top)\}_{i=1}^n\), where
\(y_i\in\mathbb R\) is the outcome, \(d_i\in\mathbb R\) is the exposure, and
\(\mathbf z_{i\cdot}\in\mathbb R^L\) is a vector of candidate instruments. We work under the additive linear IV model used by \citealp{kang2016}
\begin{align}
\label{DGP}
y_i = d_i\beta^\star + \mathbf z_{i\cdot}^\top \boldsymbol\alpha^\star + \epsilon_i,
\qquad
\mathbb E[\epsilon_i \mid \mathbf z_{i\cdot}] = 0,
\end{align}
where \(\beta^\star\) is the causal parameter of interest and
\(\boldsymbol\alpha^\star=(\alpha_1^\star,\dots,\alpha_L^\star)^\top\) encodes instrument validity: instrument \(j\) is valid when \(\alpha_j^\star=0\) and invalid when \(\alpha_j^\star\neq 0\). 

We assume the first-stage relationship
\[
d_i = \mathbf z_{i\cdot}^\top \boldsymbol\gamma^\star + \nu_i,
\qquad 
\mathbb E[\mathbf z_{i\cdot}\nu_i]=\mathbf 0,
\]
with \(\gamma_j^\star\neq 0\) for all \(j=1,\dots,L\). \(\operatorname{Cov}(\nu_i,\epsilon_i)\neq 0\) is allowed, and interpreted as encoding endogeneity of $d_i$ with respect to the structural error $\epsilon_i$. Throughout, TEAM-IV requires at least two valid instruments. See Appendix B for details.

\subsection*{Key stages of TEAM-IV}
\begin{enumerate}
\item Order instruments by ridge-based direct-effect estimates
\item Construct teams via within-window minimax concave penalized regression
\item Evaluate candidate teams
\item Estimate the causal effect using instruments in the selected union
\end{enumerate}

\subsection{Ordering instruments by ridge-based direct-effect estimates}
\label{homexposure}
For the first two stages, problems arise when the elements of the true instrument-strength vector $\bg^\star$ are highly heterogeneous. In that case, the projected exposure $\P_\Z \d$ is disproportionately driven by the strongest instruments, so a fitted term $\beta \P_\Z \d$ can absorb direct-effect signal from those strong instruments. As a result, the fitted coefficients on $\Z$ may reflect both exclusion violations and first-stage heterogeneity, leading to inaccurate estimates of both \(\beta\) and \(\ba\).

To reduce heterogeneity in instrument strengths $\bg^\star$ within \(\mathrm{col}(\mathbf Z)\), we replace \(\mathbf d\) by an adjusted exposure vector \(\mathbf d_{\mathrm{adj}}\), constructed so that its projection onto \(\mathrm{col}(\mathbf Z)\) lies in the one-dimensional span of the column sum of \(\mathbf Z\). Specifically,
\[ 
\widehat{\d}_{\mathrm{hom}}:= \mathbf P_{\mathbf Z}\mathbf d_{\mathrm{adj}}
=
\hat\gamma_{\mathrm{sum}}\,\mathbf z_{\mathrm{sum}},
\qquad
\mathbf z_{\mathrm{sum}}=\mathbf Z\mathbf 1_L
\] 
This is equivalent to replacing the projected exposure direct in $\text{col}(\Z)$ with one corresponding to homogenous instrument strengths. The full construction is given in Appendix A.

The purpose of \(\widehat{\d}_{\mathrm{hom}}\) is to suppress heterogeneity in instrument strength, so that the ridge ordering and team construction are mainly driven by direct-effect structure. Without this adjustment, large variation in \(\{|\gamma^\star_j|\}_{j=1}^L \) can distort the fitted coefficients \(\hat{\boldsymbol\alpha}\), causing instruments to separate according to first-stage leverage rather than direct-effect class.

We fit the ridge-penalized projected-exposure model
\[
(\hat{\boldsymbol\alpha},\hat\beta)
=
\arg\min_{\boldsymbol\alpha,\beta}
\frac{1}{2n}
\left\|
\mathbf y-\mathbf Z\boldsymbol\alpha-\widehat{\d}_{\mathrm{hom}}\beta
\right\|_2^2
+
\lambda \|\boldsymbol\alpha\|_2^2.
\]
The ridge penalty yields
a stable, non-sparse estimate $\hat{\boldsymbol\alpha}$ that is suitable for ranking rather than selection.
$\lambda$ is chosen by cross-validation.

We then order instruments by increasing $\hat\alpha_j$, yielding an ordered instrument matrix $\mathbf Z_{\mathrm{ord}}$ with the same column space as $\mathbf Z$.
Under the linear IV model of~\eqref{DGP} and use of $\d_\mathrm{adj}$, instruments that share the same true direct effect $\alpha_j$ will typically have similar fitted coefficients. For example, if all invalid instruments share one value of $\alpha_j = \tau \neq 0$ and all valid instruments have $\alpha_j = 0$, both sets may be shifted away from zero by the same amount, but the valid instruments will still cluster together around one level of $\hat\alpha_j$ and the invalid instruments around another. Thus, the ordering induced by $\{ \hat\alpha_j\}_{j=1}^L$ is intended to place valid instruments in a contiguous block, even when their estimated direct effects are not close to zero.

Appendix B studies population conditions under which the ridge-ordering target is constant within direct-effect classes. In particular, it shows that when the adjusted projected exposure is constrained by construction to the one-dimensional span of \(\Z\one_L\), and the instrument distribution satisfies a symmetric design condition such as compound-symmetric covariance, the population ridge target is constant within each direct-effect class. Under this regime, distinct direct-effect classes induce distinct population ridge levels, so the valid instruments share one common ridge level and each invalid direct-effect class shares its own. In this case, ordering the instruments by \(\{\hat\alpha_j\}_{j=1}^L\) places the valid instruments into a contiguous block.

\subsection{Construction of ``teams'' via within-window minimax concave penalized regression}
\label{mcpteams}
For any ordered set of candidate instruments $1, \hdots, L$, TEAM-IV considers all overlapping contiguous windows of odd cardinality at least 3. Formally, for odd width
$$ w \in \{3, 5, \dots, w_\textrm{final}\}, \qquad w_\textrm{final}= \max \{ w \leq L : w \text{ odd}\},$$ we define window $W(\ell, w) =\{ \ell, \ell+1,\hdots, \ell+w-1\}$ with $\ell = 1,\hdots, L-w+1$.

Thus, for any $L>4$, that the width-\(3\) windows are $\{1,2,3\}, \: \{2,3,4\},$ and so on. The final window will be $\{L-w_\textrm{final}+1,\dots,L\}$. 

Let $W=W(\ell,w)$. Within each window, we consider
\begin{align*}
(\widehat{\boldsymbol\alpha}_W,\widehat\beta_W)
\in
\arg\min_{\boldsymbol\alpha_W,\beta}
\left\{
\frac{1}{2n}
\left\|
\mathbf y
-
\mathbf Z_W\boldsymbol\alpha_W
-
\mathbf P_{\mathbf Z_W}
\widehat{\mathbf d}_{\mathrm{hom}}\beta
\right\|_2^2
+
\sum_{j\in W}
p_{\lambda,\psi}(\alpha_j)
\right\}
\end{align*}
where $p_{\lambda,\psi}$ is the minimax concave penalty (MCP) with tuning parameter $\lambda$ and concavity parameter $\psi$; we fix $\psi = 3$ (the default in package \texttt{ncvreg}; see \citealp{breheny2011}):
\[
p_{\lambda,\psi}(\alpha_j) =
\begin{cases}
\lambda |\alpha_j| - \dfrac{\alpha_j^2}{2\psi}, & \text{if } |\alpha_j| \le \psi \lambda,\\[4pt]
\dfrac{1}{2} \psi \lambda^2, & \text{if } |\alpha_j| > \psi \lambda,
\end{cases}
\]
Local fitting across contiguous windows allows TEAM-IV to compare nearby instruments through penalized direct-effect estimates $\hat\alpha_j$. After all local fits, dyad-wise evidence aggregation across windows stabilizes these comparisons into a final representation of which instruments exhibit similar direct-effect levels. This representation is the final teamness matrix \(T\).

For each window \(W\), let
\[
s_j^{(W)}=\operatorname{sign}\!\bigl(\hat\alpha_j^{(W)}\bigr), \qquad j\in W.
\]
The window-specific contribution \(T^{(W)}\) is defined dyad-wise for \(j\neq k\) by
\[
T_{jk}^{(W)}=
\begin{cases}
1, & s_j^{(W)}=s_k^{(W)}=0,\\
0, & s_j^{(W)}=s_k^{(W)}\in\{-1,+1\},\\
-1, & s_j^{(W)}\neq s_k^{(W)}.
\end{cases}
\]
Diagonal entries are undefined, and dyads not jointly contained in \(W\) receive no contribution from that window. The final teamness matrix is obtained by dyad-wise summation over all windows:
\[
T_{jk}=\sum_{W\in\mathcal W:\,\{j,k\}\subseteq W} T_{jk}^{(W)}.
\]

Each row of final teamness matrix $T$ defines a candidate team $G_j \subseteq \{1,\dots,L \}$. For row $j$ of $T$, we construct 
$G_j = \{j\} \cup \{ k \neq j: T_{jk}>0\}$. That is, instrument \(j\) together with all instruments whose dyad-wise teamness score with \(j\) is positive. Thus, \(G_j\) collects instruments for which the aggregated window evidence favors a common direct-effect level with instrument \(j\). Distinct rows may generate identical candidate teams; duplicate sets are retained only once before cross-validation.

\begin{figure}[t]
\centering
\begin{tikzpicture}[
  >=Stealth,
  cell/.style={draw, minimum width=12mm, minimum height=10mm, align=center},
  lab/.style={font=\small},
  title/.style={font=\small\bfseries}
]

\node[cell] (c11) at (0,0)   {\(0\)};
\node[cell] (c12) at (1.4,0) {\(-1\)};
\node[cell] (c13) at (2.8,0) {\(-1\)};

\node[cell] (c21) at (0,-1.2)   {\(-1\)};
\node[cell] (c22) at (1.4,-1.2) {\(+1\)};
\node[cell] (c23) at (2.8,-1.2) {\(-1\)};

\node[cell] (c31) at (0,-2.4)   {\(-1\)};
\node[cell] (c32) at (1.4,-2.4) {\(-1\)};
\node[cell] (c33) at (2.8,-2.4) {\(0\)};

\draw[line width=0.6pt] (c11.north west) rectangle (c33.south east);

\node[lab] at ($(c11.north)!0.5!(c13.north) + (0,8mm)$)
  {\(s_j=\mathrm{sign}(\widehat\alpha_j^{(W)})\)};
\node[lab] at ($(c11.north)+(0,3mm)$) {\(-1\)};
\node[lab] at ($(c12.north)+(0,3mm)$) {\(0\)};
\node[lab] at ($(c13.north)+(0,3mm)$) {\(+1\)};

\node[lab, rotate=90] at ($(c11.west)!0.5!(c31.west) + (-12mm,0)$)
  {\(s_k=\mathrm{sign}(\widehat\alpha_k^{(W)})\)};
\node[lab] at ($(c11.west)+(-7mm,0)$) {\(-1\)};
\node[lab] at ($(c21.west)+(-7mm,0)$) {\(0\)};
\node[lab] at ($(c31.west)+(-7mm,0)$) {\(+1\)};

\end{tikzpicture}

\caption{Window-specific dyad update rule used to define \(T^{(W)}_{jk}\). Entries indicate the contribution assigned to dyad \((j,k)\) according to the signs of the local MCP-fitted direct effects \(\widehat\alpha_j^{(W)}\) and \(\widehat\alpha_k^{(W)}\).}
\label{fig:teamiv-sign-grid}
\end{figure}

\subsection{Evaluate candidate teams}
\label{teamselect}
For each distinct team $G_j$ returned by the prior step, we evaluate its plausibility as a valid instrument set using a sisVIVE-style cross-validation loss based on regressing $\y$ on $\P_\Z \d$ and the putatively invalid instruments. 

For a specified candidate valid set \(G\subseteq\{1,\ldots,L\}\), let
\(G^c=\{1,\ldots,L\}\setminus G\) denote the corresponding candidate invalid
set. Define
\[
X_G := (1,D_Z,Z_{G^c}).
\]
The associated prediction class is
\[
\mathcal F_G
=
\left\{
x\mapsto c+bD_Z+Z_{G^c}^{\top}\boldsymbol\theta
\right\}.
\]
Equivalently, \(\mathcal F_G\) is the class of linear predictors for \(Y\)
using an intercept, \(D_Z\), and the instruments indexed by \(G^c\) as
direct-effect regressors. The corresponding population prediction risk is
\[
R(G)
:=
\inf_{f\in\mathcal F_G}
\mathbb E\!\left[(Y-f)^2\right].
\]

Let \(G^\star=\{j:\alpha_j^\star=0\}\) denote the oracle valid set. Considering the population risk at $G^\star$, the best predictor in $\mathcal F_{G^\star}$ captures all direct effects of the truly invalid instruments. The resulting residual is $Y-f_{G^\star}^\star=\beta^\star (D-D_Z)+\epsilon$. Notice that there are no omitted direct-effect terms present. For any team $G$ with $G \nsubseteq G^\star$, there will be a term in the residual $Y-f_{G}^\star$ that is attributable to an omitted direct effect from the invalid index (indices) within $G$. This increases the expected squared residual $R(G)$.

To estimate this quantity, we use \(K\)-fold cross-validation. Let \(F_1,\ldots,F_K\) be a partition of \(\{1,\ldots,n\}\) into \(K\) folds. For each fold \(k\), let \(\mathbf y_{F_k}\) denote the observed outcomes in fold \(k\), and let \(\widehat{\mathbf y}^{(-k)}_{F_k}(G)\) denote the corresponding predictions from the model fit on the remaining \(K-1\) folds under candidate team \(G\). We then define
\[
\widehat R_{\mathrm{CV}}(G)
=
\frac{1}{K}\sum_{k=1}^K
\frac{1}{|F_k|}
\left\|
\mathbf y_{F_k}
-
\widehat{\mathbf y}^{(-k)}_{F_k}(G)
\right\|_2^2.
\]

For each distinct candidate team \(G\) returned by the prior step, we evaluate its plausibility as a valid instrument set using a sisVIVE-style cross-validation criterion. Let \(G^c\) denote the corresponding putative invalid set; only \(G^c\) enters the regression as direct-effect covariates.

\begin{algorithmic}[1] \label{alg:cv-sisvive}
\Require Outcome vector \(\mathbf y\in\mathbb R^n\), exposure vector
\(\mathbf d\in\mathbb R^n\), instrument matrix
\(\mathbf Z\in\mathbb R^{n\times L}\), candidate valid-instrument index set
\(G\), corresponding candidate invalid-instrument index set \(G^c\), and
number of folds \(K\)
\State Randomly partition \(\{1,\ldots,n\}\) into \(K\) folds \(F_1,\ldots,F_K\)
\For{\(k \gets 1,\ldots,K\)}
  \State \(\mathcal I_{\mathrm{te}} \gets F_k\),\quad \(\mathcal I_{\mathrm{tr}} \gets \{1,\ldots,n\} \setminus F_k\)
  \State \(\mathbf y_{\mathrm{tr}} \gets \mathbf y_{\mathcal I_{\mathrm{tr}}}\),\quad
         \(\mathbf d_{\mathrm{tr}} \gets \mathbf d_{\mathcal I_{\mathrm{tr}}}\),\quad
         \(\mathbf Z_{\mathrm{tr}} \gets \mathbf Z_{\mathcal I_{\mathrm{tr}},\cdot}\)
  \State \(\mathbf y_{\mathrm{te}} \gets \mathbf y_{\mathcal I_{\mathrm{te}}}\),\quad
         \(\mathbf d_{\mathrm{te}} \gets \mathbf d_{\mathcal I_{\mathrm{te}}}\),\quad
         \(\mathbf Z_{\mathrm{te}} \gets \mathbf Z_{\mathcal I_{\mathrm{te}},\cdot}\)
  \State \(\mathbf P_{\mathbf Z_{\mathrm{tr}}} \gets \mathbf Z_{\mathrm{tr}}
          (\mathbf Z_{\mathrm{tr}}^\top \mathbf Z_{\mathrm{tr}})^{-1} \mathbf Z_{\mathrm{tr}}^\top\)
  \State \(\mathbf P_{\mathbf Z_{\mathrm{te}}} \gets \mathbf Z_{\mathrm{te}}
          (\mathbf Z_{\mathrm{te}}^\top \mathbf Z_{\mathrm{te}})^{-1} \mathbf Z_{\mathrm{te}}^\top\)
  \State \(\mathbf d_{\mathrm{proj,tr}} \gets \mathbf P_{\mathbf Z_{\mathrm{tr}}} \mathbf d_{\mathrm{tr}}\),\quad
         \(\mathbf d_{\mathrm{proj,te}} \gets \mathbf P_{\mathbf Z_{\mathrm{te}}} \mathbf d_{\mathrm{te}}\)
\State \(\mathbf X_{\mathrm{tr}}
\gets
\bigl(
\mathbf 1,\,
\mathbf d_{\mathrm{proj,tr}},\,
\mathbf Z_{\mathrm{tr},\cdot G^c}
\bigr)\)
\State \(\mathbf X_{\mathrm{te}}
\gets
\bigl(
\mathbf 1,\,
\mathbf d_{\mathrm{proj,te}},\,
\mathbf Z_{\mathrm{te},\cdot G^c}
\bigr)\)
  \State Fit least-squares regression
         \(\mathbf y_{\mathrm{tr}} = \mathbf X_{\mathrm{tr}} \boldsymbol\theta + \boldsymbol\varepsilon\)
         to obtain \(\hat{\boldsymbol\theta}\)
  \State \(\hat{\mathbf y}_{\mathrm{te}} \gets \mathbf X_{\mathrm{te}} \hat{\boldsymbol\theta}\)
  \State \(\mathbf r_{\mathrm{te}} \gets \mathbf y_{\mathrm{te}} - \hat{\mathbf y}_{\mathrm{te}}\)
  \State \(\text{fold\_mse}_k \gets \dfrac{1}{|\mathcal I_{\mathrm{te}}|}
         \sum_{i \in \mathcal I_{\mathrm{te}}} (r_{\mathrm{te},i})^2\)
\EndFor
\State \(\text{mean\_mse} \gets \dfrac{1}{K} \sum_{k=1}^K \text{fold\_mse}_k\)
\State \(\text{se\_mean\_mse} \gets \sqrt{\dfrac{\operatorname{Var}(\text{fold\_mse}_1,\ldots,\text{fold\_mse}_K)}{K}}\)
\State \(\text{mean\_rmse} \gets \sqrt{\text{mean\_mse}}\)
\State \(\text{se\_rmse} \gets \dfrac{\text{se\_mean\_mse}}{2\,\text{mean\_rmse}}\) \Comment{Delta method}
\State \Return \(\text{mean\_rmse},\ \text{se\_rmse}\)
\end{algorithmic}

\begin{assumption}[Residual nonredundancy of omitted invalid instruments]
\label{ass:residual-nonredundancy}
For every candidate team \(G\) containing at least one truly invalid instrument, let
\[
A_G=G\cap A^\star
\]
denote the set of invalid instruments omitted from the direct-effect regressors. Let
\[
\widetilde{\mathbf Z}_{A_G}
=
\mathbf Z_{A_G}
-
\Pi_{\mathcal S_G}\mathbf Z_{A_G},
\qquad
\mathcal S_G
=
\operatorname{span}
\left\{
1,D_Z,\mathbf Z_{G^c}
\right\},
\]
where \(\Pi_{\mathcal S_G}\) denotes population linear projection onto \(\mathcal S_G\). Assume that
\[
\boldsymbol\Sigma_{A_G\mid G^c}
:=
\mathbb E
\left[
\widetilde{\mathbf Z}_{A_G}
\widetilde{\mathbf Z}_{A_G}^{\top}
\right]
\]
is positive definite.
\end{assumption}

Assumption~\ref{ass:residual-nonredundancy} is equivalent to requiring that no nonzero linear combination of the omitted invalid instruments lies in the population linear span of \(1\), \(D_Z\), and \(\mathbf Z_{G^c}\). Because
\[
\boldsymbol\alpha_{A_G}^\star\neq\mathbf 0,
\]
it follows that
\[
\left(\boldsymbol\alpha_{A_G}^\star\right)^{\top}
\boldsymbol\Sigma_{A_G\mid G^c}
\boldsymbol\alpha_{A_G}^\star
>0.
\]
Under the structural model and moment conditions stated above, this quantity equals the excess population prediction risk generated by omitting the direct effects associated with \(A_G\). Hence,
\[
R(G)>R(G_0)
\]
for every contaminated candidate \(G\) and every uncontaminated candidate \(G_0\subseteq G^\star\).

For each candidate direct-effect specification, \(K\)-fold cross-validation provides an estimate of its out-of-sample prediction risk \(R(G)\). Suppose that the candidate list contains at least one team \(G_0\subseteq G^\star\) consisting exclusively of valid instruments. Under standard regularity conditions and Assumption~\ref{ass:residual-nonredundancy}, candidate teams containing one or more truly invalid instruments have strictly greater population prediction risk and are therefore therefore expected to receive larger cross-validated prediction loss. By contrast, an uncontaminated candidate \(G\subseteq G^\star\) may attain the same population risk as the oracle set \(G^\star\), because treating additional truly valid instruments as direct-effect covariates does not introduce omitted direct effects. The criterion should therefore be interpreted as ranking candidate teams according to their predictive compatibility with the validity restrictions, rather than as uniquely identifying \(G^\star\). The preceding team-construction step supplies a finite collection of distinct candidate sets, while the cross-validation criterion ranks those candidates according to the predictive consequences of their implied validity restrictions.


\subsection{Estimate the causal effect using instruments in the selected union}

Let
\begin{align*} 
\widehat{\mathcal G}_{\mathrm{near}}=
\left\{G \in \mathcal G :\widehat R_{\mathrm{CV}}(G)
\le
\min_{G' \in \mathcal G}\widehat R_{\mathrm{CV}}(G')
+c\,\operatorname{SE}_{\mathrm{CV}}\right\},
\end{align*}

where \(c=1\) by default and \(\operatorname{SE}_{\mathrm{CV}}\) denotes the estimated fold-wise standard error of the minimum cross-validation risk. We define the final retained instrument set as
\[
\widehat V
=
\bigcup_{G \in \widehat{\mathcal G}_{\mathrm{near}}} G.
\]
This union rule is intended to retain instruments that appear plausibly valid across near-optimal candidate teams. It guards against instability when multiple teams have very similar cross-validation risk. 
In the final estimation step, the instruments in $\widehat V$ are used as excluded instruments, whereas the instruments classified as putatively invalid, $\widehat V^c$, are included as covariates to adjust for their potential direct effects on the outcome. We then estimate the causal effect $\beta$ by two-stage least squares, instrumenting the exposure with $\mathbf Z_{\widehat V}$ while including $\mathbf Z_{\widehat V^c}$ as exogenous adjustment variables in both stages. The resulting two-stage least squares coefficient of the exposure is taken as the TEAM-IV estimation, $\widehat{\beta}_\mathrm{TEAM}$.

\section{Simulation Study}\label{simulation}

We conduct simulations under a range of settings to evaluate the estimation performance of TEAM-IV relative to established methods, including sisVIVE. Estimation performance is measured by the absolute estimation error $| \widehat \beta -\beta^\star|$. We compare TEAM-IV with the oracle two-stage least squares estimator, which uses knowledge of the true valid and invalid instrument sets.

Throughout all settings, the sample size is $n=2000$ and the number of candidate instruments is $L=10$. In the main simulations, the observations $(y_i, d_i, \mathbf{z}_{i\cdot})$, $i=1,\ldots,n$ are generated according to
$\mathbf z_{i\cdot} \sim \mathcal N_L(\mathbf 0, \mathbf I_L)$ and 

\begin{align*}
\left\{
\begin{aligned}
y_i &= \mathbf z_{i\cdot}\Tr \boldsymbol\alpha^{\star} + d_i \beta^{\star} + \epsilon_i \\
d_i &= \mathbf z_{i\cdot}\Tr \boldsymbol\gamma^{\star} + \nu_i
\end{aligned}
\right.\text{,}
\qquad
\begin{pmatrix}
\epsilon_i \\ \nu_i 
\end{pmatrix}
\overset{\text{iid}}{\sim} \mathcal{N}_2 \left(
\begin{bmatrix}
0 \\ 0
\end{bmatrix},
\begin{bmatrix}
1 & \rho_{\epsilon \nu}\\[2pt]
\rho_{\epsilon \nu} & 1
\end{bmatrix}
\right).
\end{align*}

Throughout all simulations, the causal effect is fixed at $\beta^\star =1$. In the main simulations, the first-stage coefficients are generated as
$$\gamma_j^\star \overset{\text{iid}}{\sim} \operatorname{Unif}(0.08,0.14), \quad j=1,\ldots,L.$$ Under the independent-instrument design, this corresponds to an approximate first-stage $F$-statistic of 25 when the exposure is regressed on all candidate instruments.

Define $s=|A^\star|=\norm{\boldsymbol \alpha^\star}_0$. In all simulations, the invalid set $A^\star$ is selected uniformly at random among all subsets of $\{1, \ldots, L\}$ with cardinality $s$. We vary the following settings:
\begin{enumerate}
\item the number of invalid instruments, $s \in \{0,2,4,6,8\}$;
\item the magnitude of the invalid direct effects, for $s>0$, $a_\alpha\in\{0.2,0.8\}$;
\item the endogeneity parameter $\rho_{\epsilon \nu} \in \{0.3, 0.6\}$.
\end{enumerate}
For $j\in A^\star$, we set
$\alpha_j^\star=a_\alpha$,
and for $j\notin A^\star$, we set $\alpha_j^\star=0$. When $s=0$, all candidate instruments are valid, so $\boldsymbol\alpha^\star=\mathbf 0$.

Throughout the simulation study, TEAM-IV refers to the implementation using the 1-SE unioning rule for valid-instrument selection. The final causal effect estimate is obtained by two-stage least squares after treating instruments outside the selected union as covariates. Unless otherwise noted, all reported TEAM-IV results use this implementation.

For each combination of $(s, a_\alpha, \rho_{\epsilon \nu})$, we perform 1000 Monte Carlo replications. The absolute estimation error $| \widehat \beta -\beta^\star|$ for each replication was computed for TEAM-IV, sisVIVE, oracle 2SLS, and the Confidence Interval Method for Selecting Instrument Variables (CIIV) by Windmeijer et al. (2021).

In additional simulations, we generate $\mathbf z_{i\cdot}$ from multivariate normal distributions with correlated instrument covariance structures. Specifically, we consider a compound-symmetric covariance matrix
$$\boldsymbol\Sigma_Z
=
\frac{1}{2}\mathbf I_L
+
\frac{1}{2}\mathbf 1_L\mathbf 1_L^{\top},$$
and an autoregressive covariance matrix satisfying
$$(\boldsymbol\Sigma_Z)_{jk}
=
0.5^{|j-k|},
\qquad
1\leq j,k\leq L.$$

We also consider targeted sensitivity settings that depart from the main simulations in the structure of $\boldsymbol\alpha^\star$ or $\boldsymbol\gamma^\star$. First, we consider mixed-sign invalid direct effects, in which the nonzero entries of $\boldsymbol\alpha^\star$ have equal magnitude but both positive and negative signs. Second, we consider a heterogeneous signed first-stage design, in which the entries of $\boldsymbol\gamma^\star$ vary in magnitude and are not constrained to have a common sign. These sensitivity settings are used to assess whether TEAM-IV remains stable when the invalid direct effects or first-stage coefficients are less sign-coherent than in the main simulation design.

\subsection{Main simulation results under independent instruments}
\begin{figure}[h!]
\begin{center}
\includegraphics[width=\textwidth]{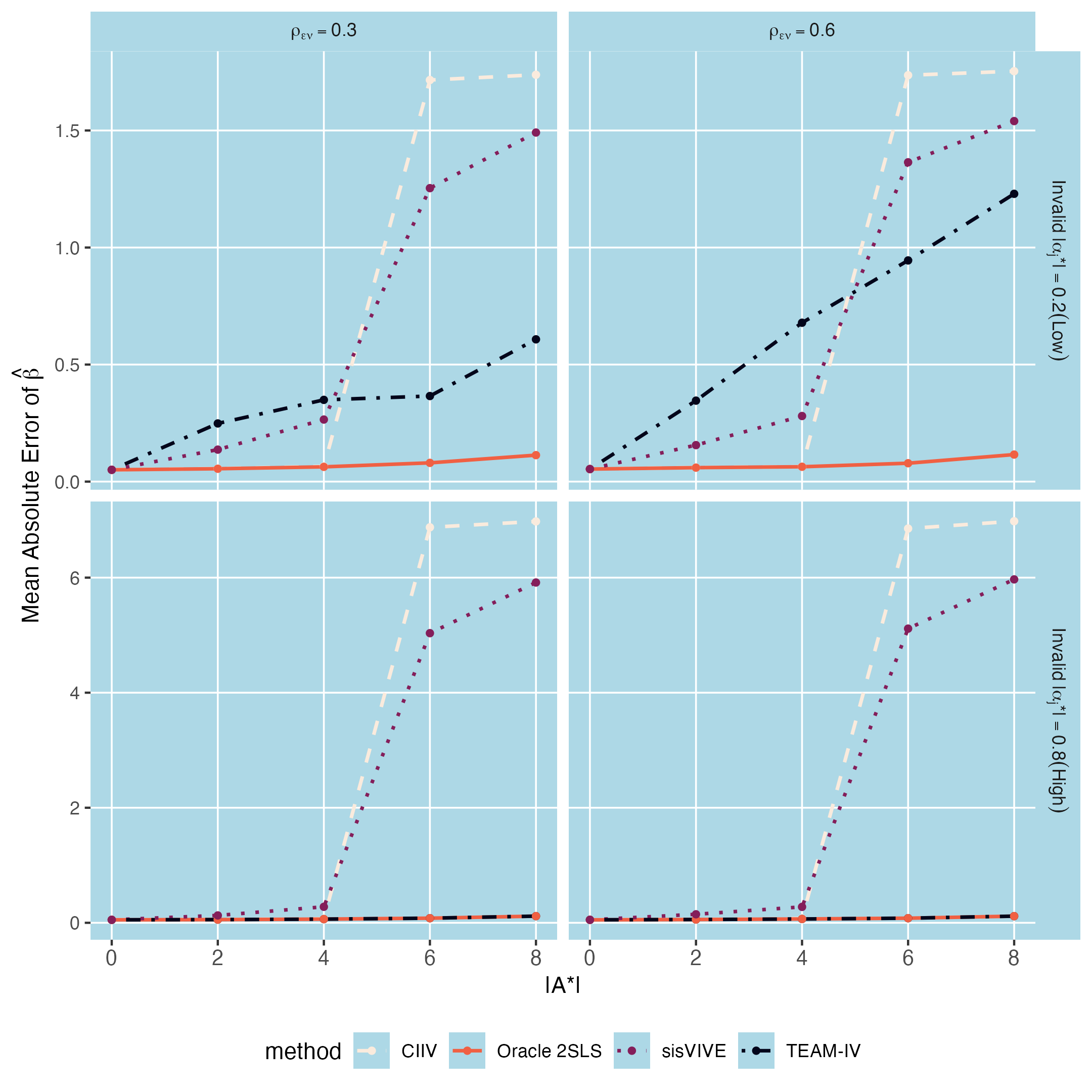}
\caption{Mean absolute error (MAE) of $\widehat\beta$ under the main simulation design with independent candidate instruments. Results are shown by the number of invalid instruments $|A^\star|$, the endogeneity parameter $\rho_{\epsilon\nu}$, and the invalid direct-effect magnitude $a_\alpha$. Each point represents the average across 1000 Monte Carlo replications.}
\label{fig:main-independent-mae}
\end{center}
\end{figure}

Figure~\ref{fig:main-independent-mae} reports the mean absolute estimation error of $\widehat\beta$ under the main simulation design with independent candidate instruments. Results are stratified by the endogeneity parameter $\rho_{\epsilon\nu}$, the magnitude of the invalid direct effects $a_\alpha$, and the number of invalid instruments $|A^\star|$. The compared methods are TEAM-IV, sisVIVE, CIIV, and oracle 2SLS.

Across the settings within the main simulation design, TEAM-IV nearly always produces lower mean absolute estimation error of $\widehat\beta$ than sisVIVE when $|A^\star|\geq 2$. Compared to CIIV, the results are more nuanced, but the instability of CIIV is apparent whenever $|A^\star| \geq 6$, where CIIV produces mean absolute estimation error clearly exceeding both TEAM-IV and sisVIVE. 

Across all four panels, when $|A^\star|=0$, TEAM-IV, sisVIVE, and CIIV all have error approximating the oracle 2SLS level, as expected when all candidate instruments are valid.

When $a_\alpha=0.2$, all methods have relatively small error for $|A^\star| \in \{2,4\}$: TEAM-IV and sisVIVE both show gradually increasing mean absolute error, while the error of CIIV stays extremely close to the oracle 2SLS gold standard. When $|A^\star| \in \{6,8\}$ TEAM-IV performance deteriorates more gradually compared to both sisVIVE and CIIV, although TEAM-IV performance does suffer noticeably relative to oracle 2SLS.

When $a_\alpha=0.8$, the separation between methods is larger. CIIV and sisVIVE exhibit sharp increases in estimation error once the number of invalid instruments satisfies $|A^\star| \in \{6,8\}$. In contrast, TEAM-IV seems to attain the mean estimation error of the oracle 2SLS estimator across the range of $|A^\star|$, indicating robustness to strong invalid direct effects.

Increasing $\rho_{\epsilon \nu}$ from $0.3$ to $0.6$ has a comparatively smaller effect on the qualitative ranking of the methods than increasing $a_\alpha$ or $|A^\star|$. That is, oracle 2SLS is always the best method, TEAM-IV is next best, and sisVIVE and CIIV tend to have more degraded performance when the majority of instruments are invalid.

These results support the advantage of TEAM-IV relative to sisVIVE/CIIV: it remains more stable when the candidate instrument set becomes increasingly contaminated by invalid instruments. The results also show that the relative performance differences are driven more by the number and magnitude of the invalid direct effects than by the endogeneity level.

The main design uses independent candidate instruments, common-sign invalid direct effects, and exclusively positive first-stage coefficients. The following subsections therefore examine TEAM-IV's behavior under correlated instruments, mixed-sign invalid direct effects, and heterogeneous/mixed-sign first-stage coefficients.

\subsection{Robustness to correlated instruments}
\begin{figure}[h!]
\begin{center}
\includegraphics[width=\textwidth]{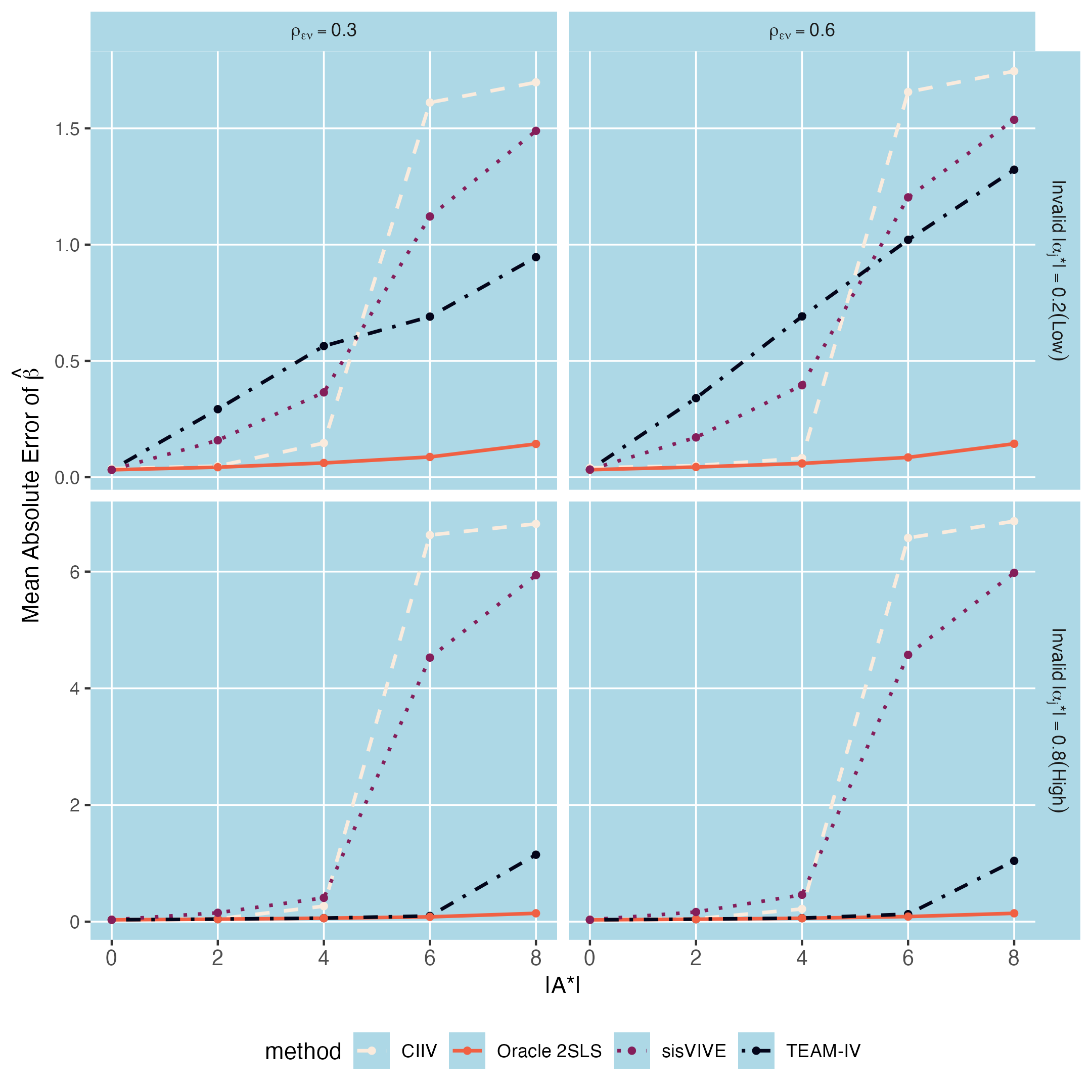}
\caption{Mean absolute error (MAE) of $\widehat\beta$ under the simulation design with autoregressive instrument covariance and positive invalid direct effects. Results are shown by the number of invalid instruments \(|A^\star|\), the endogeneity parameter \(\rho_{\epsilon\nu}\), and the invalid direct-effect magnitude \(a_\alpha\). Each point represents the average across 1000 Monte Carlo replications.}
\label{fig:ar1-mae}
\end{center}
\end{figure}

\begin{figure}[h!]
\begin{center}
\includegraphics[width=\textwidth]{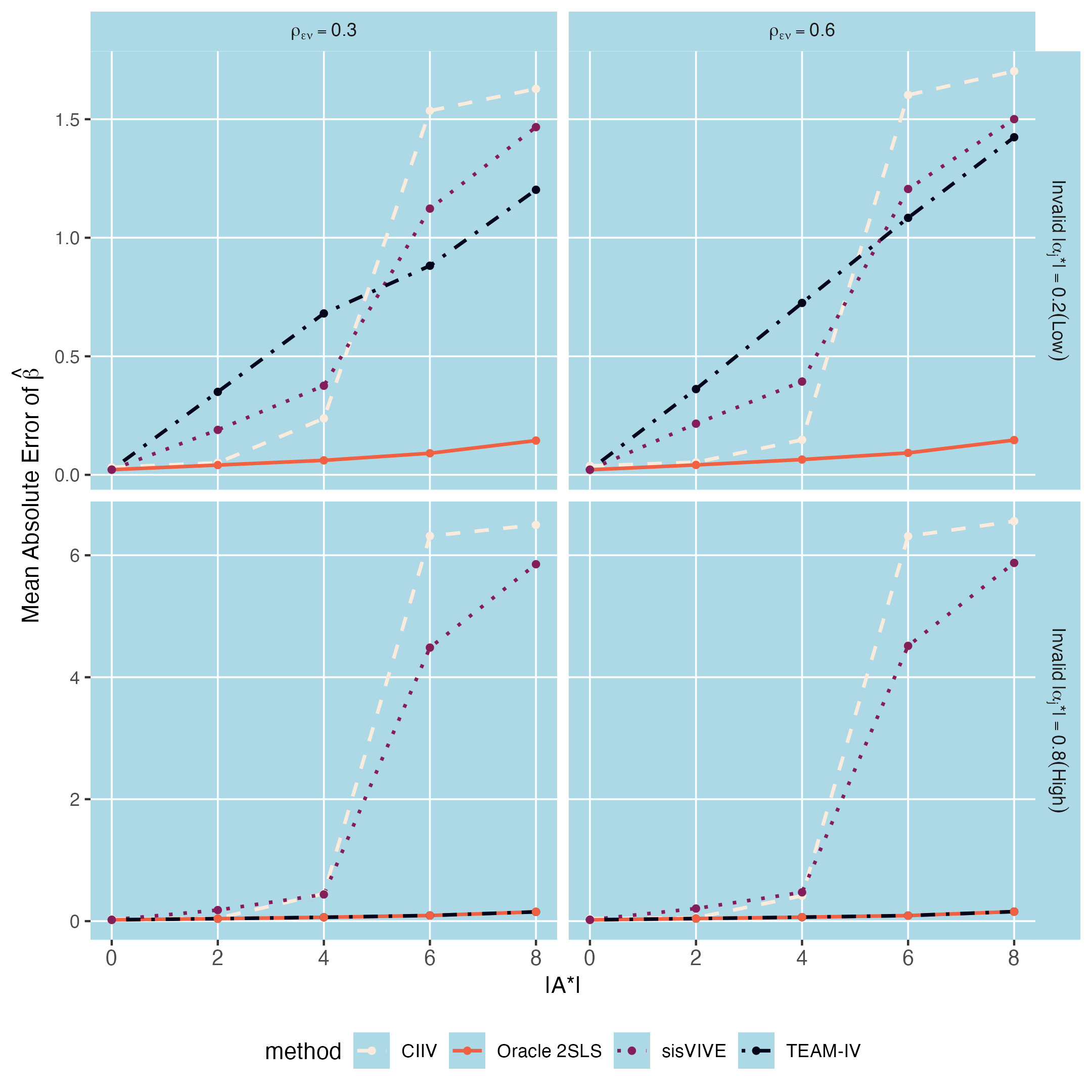}
\caption{Mean absolute error (MAE) of $\widehat\beta$ under the simulation design with compound-symmetric instrument covariance and positive invalid direct effects. Results are shown by the number of invalid instruments \(|A^\star|\), the endogeneity parameter \(\rho_{\epsilon\nu}\), and the invalid direct-effect magnitude \(a_\alpha\). Each point represents the average across 1000 Monte Carlo replications.}
\label{fig:cs-mae}
\end{center}
\end{figure}

Figures~\ref{fig:ar1-mae} and~\ref{fig:cs-mae} report the mean absolute estimation error of $\widehat\beta$ when the candidate instruments are generated from correlated multivariate normal distributions. Figure~\ref{fig:ar1-mae} uses an autoregressive covariance structure with $(\Sigma_Z)_{jk}=0.5^{|j-k|}$, while Figure~\ref{fig:cs-mae} uses a compound-symmetric covariance structure with off-diagonal correlation $0.5$. All other features of the main simulation design are held fixed.

The qualitative ranking of the methods is similar to that observed under independent instruments. Oracle 2SLS has the smallest mean absolute error across settings, TEAM-IV generally remains the best-performing non-oracle method, and sisVIVE and CIIV exhibit substantially larger errors when invalid instruments constitute a majority of candidate instruments.

Under the autoregressive covariance structure, TEAM-IV generally performs comparably to sisVIVE when the majority rule holds (i.e., $|A^\star| \in \{0,2,4\}$). When the majority rule is violated and $|A^\star| \in \{6, 8\}$, sisVIVE and CIIV exhibit very large mean absolute errors while TEAM-IV suffers a more limited deterioration in performance. This pattern indicates that TEAM-IV retains robustness when instrument correlations decay with distance.

Under the compound-symmetric covariance structure, the same broad pattern persists. The correlated design makes the estimation problem more difficult in some low-direct-effect, majority-invalid settings, where TEAM-IV is farther from the oracle 2SLS benchmark. Nonetheless, when $a_\alpha=0.8$, TEAM-IV continues to avoid the large error increases observed for sisVIVE and CIIV when $|A^\star| \in \{6,8\}$.

The autoregressive and compound-symmetric designs lead to similar qualitative conclusions, despite representing different forms of instrument correlation. TEAM-IV performance is therefore not driven solely by the independent-instrument assumption in the main simulation design.

\subsection{Sensitivity to sign incoherence}
\begin{figure}[h!]
\begin{center}
\includegraphics[width=\textwidth]{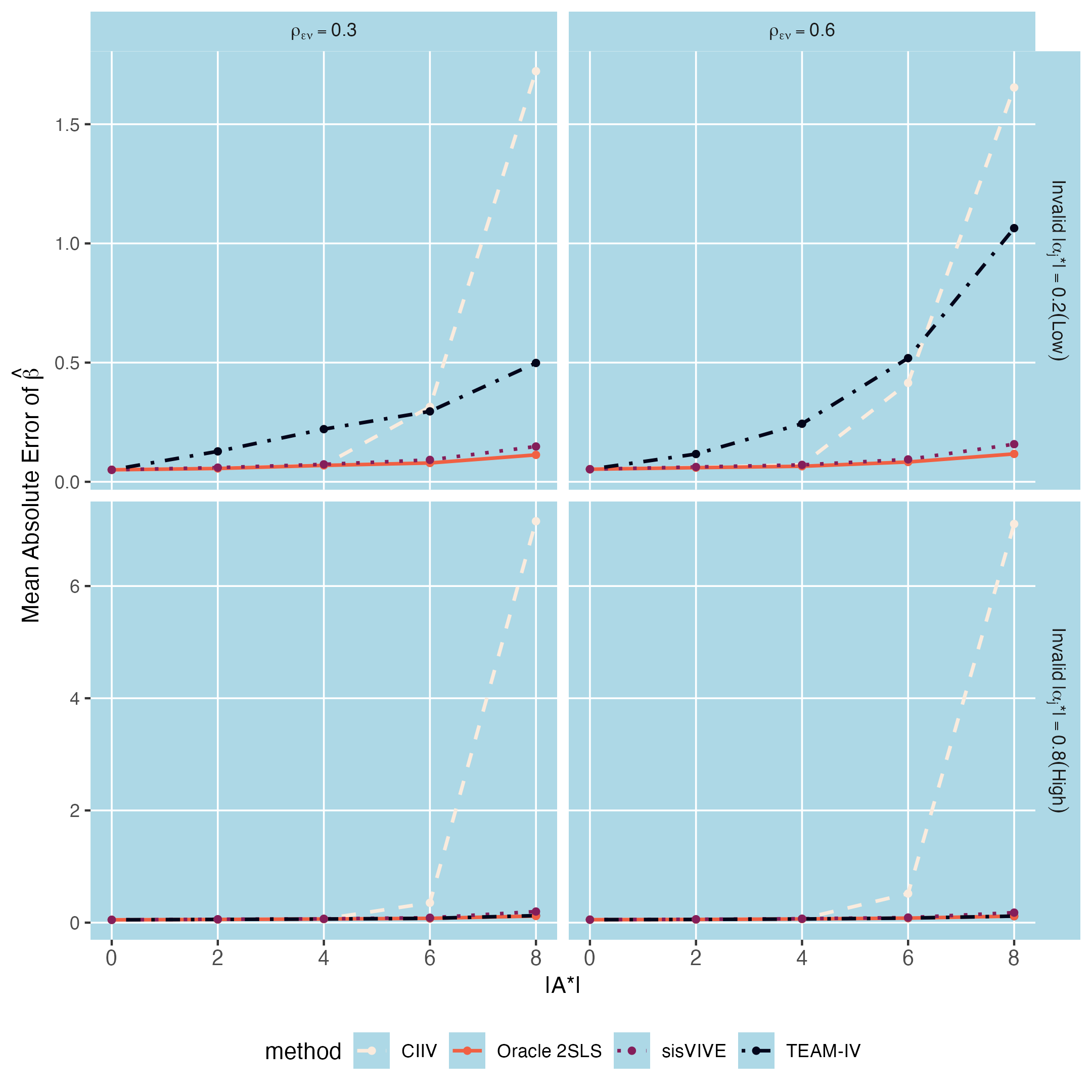}
\caption{Mean absolute error (MAE) of $\widehat\beta$ under the simulation design with independent instrument covariance and invalid direct effects split evenly between positive and negative signs. Results are shown by the number of invalid instruments \(|A^\star|\), the endogeneity parameter \(\rho_{\epsilon\nu}\), and the invalid direct-effect magnitude \(a_\alpha\). Each point represents the average across 1000 Monte Carlo replications.}
\label{fig:alpmix-independent-mae}
\end{center}
\end{figure}

Figure~\ref{fig:alpmix-independent-mae} reports the mean absolute estimation error of $\widehat\beta$ when the candidate instruments are independent and invalid direct effects are split evenly between positive and negative signs. This design relaxes the common-sign direct-effect structure used in the main simulations while retaining the independent instrument covariance structure and the positive first-stage coefficients $\gamma_j^\star$, $j=1,\ldots,L$.

The aim of this sensitivity analysis is to determine whether TEAM-IV performance depends on the sign coherence of the invalid direct effects. In the main simulation design, all invalid instruments have direct effects with the same sign. Here, the nonzero entries of $\boldsymbol \alpha^\star$ have equal magnitude but are evenly divided between positive and negative signs.

The principal effect of using mixed-sign invalid direct effects is that sisVIVE performs substantially better, consistently approximating oracle 2SLS under independent instruments, than in the corresponding common-sign setting. CIIV also benefits from the splitting of invalid direct effects, as its efficacy is contingent on a plurality rule, that can still hold in the case of $|A^\star|=6$ and $|G^\star|=4$, since the direct effects of the invalid instruments fall into two distinct nonzero clusters. TEAM-IV, by contrast, is not materially degraded by the loss of sign coherence and continues to have low mean absolute estimation error across most settings. 

When $a_\alpha=0.2$, TEAM-IV performs slightly to moderately worse than sisVIVE and CIIV when $|A^\star| \in \{2, 4\}$. When $|A^\star|=8$, CIIV error increases dramatically due to violation of the plurality rule when only two instruments are valid, while TEAM-IV suffers a more gradual falloff in performance.

When $a_\alpha=0.8$, the mixed-sign direct-effect design sharply distinguishes CIIV from the other methods. CIIV exhibits large increases in mean absolute estimation error when $|A^\star|=8$, whereas TEAM-IV and sisVIVE remain close to the oracle 2SLS benchmark across all settings of $|A^\star|$ and $\rho_{\epsilon \nu}$. 

Overall, the mixed-sign direct-effect sensitivity analysis indicates that TEAM-IV does not require direct effects to have a common sign in order to remain stable. The larger change is observed for sisVIVE, whose performance improves when there is a balance of positive and negative direct effects. Thus, sign incoherence in $\boldsymbol\alpha^\star$ reduces separation between TEAM-IV and sisVIVE, but it does not create a failure mode for TEAM-IV

The remaining sensitivity analyses examine whether the same qualitative behavior persists when first-stage coefficients $\gamma_j^\star$ are allowed to vary in both sign and magnitude. See Supplementary Materials.

\section{MESA application and data-anchored validation}
\label{data-analysis}

\begin{table}[!htbp]
\centering
\caption{Structure and objectives of the empirical MESA application and
data-anchored validation studies.}
\label{tab:mesa_analysis_structure}

\small
\setlength{\tabcolsep}{4pt}
\renewcommand{\arraystretch}{1.20}

\begin{tabularx}{\textwidth}{
    @{}
    >{\raggedright\arraybackslash}p{0.19\textwidth}
    >{\centering\arraybackslash}p{0.05\textwidth}
    >{\centering\arraybackslash}p{0.05\textwidth}
    >{\centering\arraybackslash}p{0.13\textwidth}
    >{\centering\arraybackslash}p{0.13\textwidth}
    >{\raggedright\arraybackslash}X
    @{}
}
\toprule
\textbf{Component}
& \(\boldsymbol{y}\)
& \(\boldsymbol{d}\)
& \(\boldsymbol{Z}\)
& \textbf{Ground truth}
& \textbf{Main scientific purpose} \\
\midrule

Empirical MESA analysis
& Obs
& Obs
& Obs
& No
& Estimate the effect of LDL cholesterol on carotid IMT and identify
candidate invalid instruments. \\

Outcome-informed pseudo-instrument stress test
& Obs
& Obs
& half obs, half pseudo
& Partial only
& Evaluate detection of deliberately outcome-associated
pseudo-instruments while retaining the observed outcome and exposure. \\

MESA-anchored semi-synthetic outcome analysis
& Sim
& Obs
& Obs
& Yes
& Evaluate causal-effect estimation and recovery of the known
invalid-instrument set under a prespecified outcome-generating model. \\

\bottomrule
\end{tabularx}
\begin{minipage}{\textwidth}
\footnotesize
\textit{Notes:} Obs, observed; Sim, simulated. Partial ground truth means
that the pseudo-instruments are known to have been constructed using the
outcome, whereas the validity of the observed MESA SNPs and the true causal
effect remain unknown. 
\end{minipage}
\end{table}

\begin{table}
\caption{Sample characteristics}
\label{tab:sample_characteristics}
\begin{tabular}{lc}
\toprule
\textbf{Characteristic} & \textbf{Overall; N = 5,602}\\
\midrule
Age, years& 62.2 (10.3)\\
Gender& \\
\hspace{1em}Female & 2,907 (51.9\%)\\
\hspace{1em}Male & 2,695 (48.1\%)\\
Race/ethnicity& \\
\addlinespace
\hspace{1em}White & 2,265 (40.4\%)\\
\hspace{1em}Black & 1,375 (24.5\%)\\
\hspace{1em}Hispanic & 1,272 (22.7\%)\\
\hspace{1em}Chinese American & 690 (12.3\%)\\
LDL cholesterol, mg/dL& 117.2 (31.5)\\
\addlinespace
Mean common carotid far-wall IMT, micrometers& 740.4 (208.2)\\
Lipid-lowering medication use& \\
\hspace{1em}No & 4,696 (83.9\%)\\
\hspace{1em}Yes & 904 (16.1\%)\\
Current cigarette smoking& \\
\addlinespace
\hspace{1em}No & 4,891 (87.3\%)\\
\hspace{1em}Yes & 711 (12.7\%)\\
Diabetes status& \\
\hspace{1em}Normal & 4,119 (73.6\%)\\
\hspace{1em}Impaired fasting glucose (IFG) & 784 (14.0\%)\\
\addlinespace
\hspace{1em}Treated diabetes & 554 (9.9\%)\\
\hspace{1em}Untreated diabetes & 141 (2.5\%)\\
Body mass index, kg/m²& 28.3 (5.5)\\
Systolic blood pressure, mmHg& 126.3 (21.5)\\
\bottomrule
\multicolumn{2}{l}{\rule{0pt}{1em}\textsuperscript{1} Mean (SD); n (\%)}\\
\end{tabular}
\end{table}
\begin{table}[!h]
\centering
\caption{\label{tab:tab:semisynthetic-comparison}Performance of TEAM-IV and sisVIVE in the MESA-anchored semi-synthetic outcome experiment.}
\centering
\begin{threeparttable}
\begin{tabular}[t]{lrrrrrr}
\toprule
Method & nrep & Bias & Mean absolute error & Sensitivity & Specificity & PPV\\
\midrule
TEAM-IV & 50 & -0.217 & 2.323 & 1 & 1.000 & 1.000\\
sisVIVE & 50 & 1.816 & 2.806 & 1 & 0.873 & 0.929\\
\bottomrule
\end{tabular}
\begin{tablenotes}
\item Sensitivity is the proportion of truly invalid instruments classified as invalid. Specificity is the proportion of truly valid instruments classified as valid. PPV is the proportion of instruments classified as invalid that were truly invalid.
\end{tablenotes}
\end{threeparttable}
\end{table}

\begin{table}[!htbp]
\centering
\caption{Performance under the outcome-informed pseudo-instrument stress
test across 50 replications.}
\label{tab:pseudo_instrument_results}

\small
\setlength{\tabcolsep}{4pt}
\renewcommand{\arraystretch}{1.20}

\begin{tabularx}{\textwidth}{
    @{}
    >{\raggedright\arraybackslash}p{0.19\textwidth}
    >{\centering\arraybackslash}p{0.13\textwidth}
    >{\centering\arraybackslash}p{0.13\textwidth}
    >{\centering\arraybackslash}p{0.18\textwidth}
    >{\centering\arraybackslash}X
    @{}
}
\toprule
\textbf{Method}
& \textbf{Mean \(\widehat{\beta}\)}
& \textbf{SD of \(\widehat{\beta}\)}
& \textbf{Pseudo-SNP detection}
& \textbf{Observed SNPs retained} \\
\midrule

TEAM-IV
& \(1.21\)
& \(0.46\)
& \(100.0\%\)
& \(100.0\%\) \\

sisVIVE
& \(7.77\)
& \(0.34\)
& \(1.2\%\)
& \(99.5\%\) \\

\bottomrule
\end{tabularx}

\vspace{0.5em}

\begin{minipage}{\textwidth}
\footnotesize
\textit{Note:}
The observed outcome and exposure were held fixed across replications;
variation therefore reflects repeated construction of outcome-informed
pseudo-instruments. Pseudo-SNP detection is the proportion of constructed
pseudo-instruments classified as invalid, whereas observed-SNP retention is
the proportion of observed MESA SNPs classified as valid. The individual first-stage \(F\)-statistics of the accepted pseudo-SNPs
ranged from 42.33 to values exceeding 100. The observed eight-SNP instrument set had a
joint first-stage \(F\)-statistic of 2.16. Because the
pseudo-instruments were constructed using the observed outcome, this
analysis constitutes an adversarial stress test rather than a simulation
with known causal ground truth.
\end{minipage}

\end{table}

\begin{table}[!h]
\centering
\caption{Summary of instruments used in the analysis of LDL cholesterol and carotid intima-media thickness.}
\label{tab:snp_summary}
\centering
\resizebox{\ifdim\width>\linewidth\linewidth\else\width\fi}{!}{
\begin{threeparttable}
\begin{tabular}[t]{lcccc}
\toprule
Instrument & Major alleles & Heterozygote & Minor alleles & MAF (SE)\\
\midrule
rs11206510 & 4205 (75.1\%; A/A) & 1266 (22.6\%; A/G) & 130 (2.3\%; G/G) & 0.136 (0.003)\\
rs3846663 & 2310 (41.2\%; C/C) & 2535 (45.3\%; C/T) & 757 (13.5\%; T/T) & 0.361 (0.005)\\
rs1501908 & 2134 (38.1\%; C/C) & 2487 (44.4\%; C/G) & 981 (17.5\%; G/G) & 0.397 (0.005)\\
rs4420638 & 4017 (71.8\%; T/T) & 1442 (25.8\%; T/C) & 136 (2.4\%; C/C) & 0.153 (0.003)\\
rs10401969 & 4501 (80.3\%; A/A) & 1022 (18.2\%; A/G) & 79 (1.4\%; G/G) & 0.105 (0.003)\\
\addlinespace
rs6102059 & 2439 (43.6\%; C/C) & 2466 (44.0\%; C/T) & 695 (12.4\%; T/T) & 0.344 (0.004)\\
rs964184 & 3604 (64.4\%; G/G) & 1748 (31.2\%; G/C) & 247 (4.4\%; C/C) & 0.200 (0.004)\\
rs2650000 & 2749 (49.1\%; G/G) & 2243 (40.0\%; G/T) & 610 (10.9\%; T/T) & 0.309 (0.004)\\
\bottomrule
\end{tabular}
\begin{tablenotes}
\item \textit{Note: } 
\item Genotype entries are reported as count (percentage among participants with a nonmissing genotype; genotype). MAF denotes minor-allele frequency. The MAF standard error was calculated as sqrt[MAF(1-MAF)/(2n)], where n is the number of participants with a nonmissing genotype for that instrument.
\end{tablenotes}
\end{threeparttable}}
\end{table}


\begin{figure}[!htbp]
    \centering
    \includegraphics[
        width=0.82\textwidth
    ]{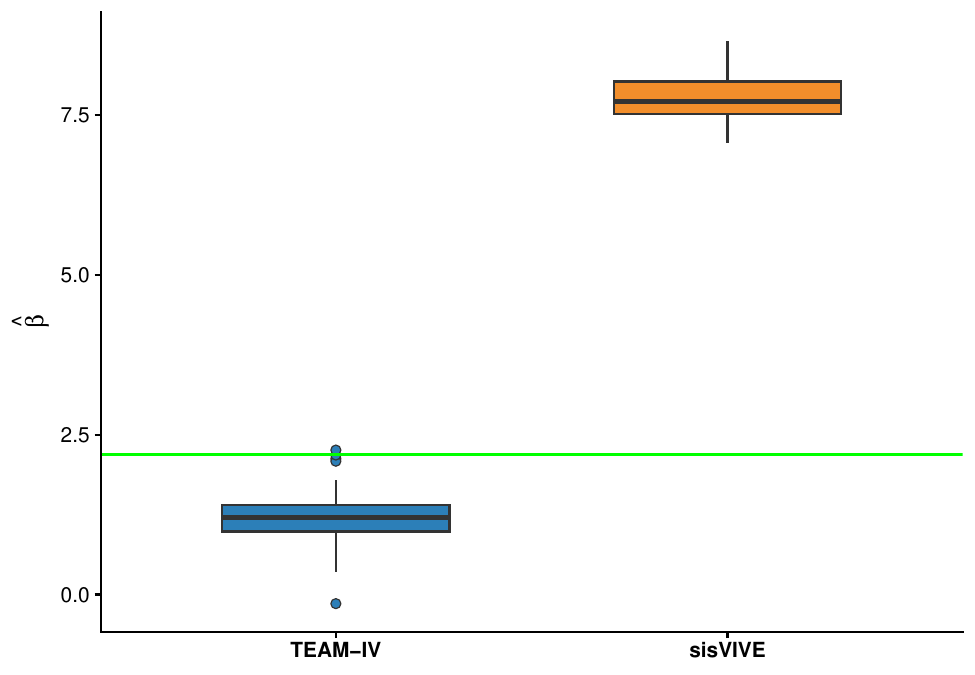}

    \caption{
    Distribution of estimated LDL-C effects across 50 repetitions of
    the outcome-informed pseudo-instrument stress test. In each
    repetition, the instrument set contained eight observed MESA SNPs
    and eight newly constructed pseudo-SNPs, so that 50\% of the
    instruments were pseudo-instruments. The observed carotid IMT
    outcome, LDL-C exposure, covariates, and observed SNPs were held
    fixed across repetitions. Pseudo-SNPs were generated with target
    minor-allele frequency \(0.30\), \(\rho_D=0.30\), and
    \(\rho_Y=0.20\), and were required to have a conditional
    first-stage \(F\)-statistic of at least 10, a direct-association
    \(t\)-statistic of at least 3, and a maximum empirical pairwise
    \(r^2\) of 0.10 with the other instruments. The horizontal green
    line denotes the empirical reference estimate obtained using only
    the eight observed MESA SNPs,
    \(\widehat{\beta}_{\mathrm{MESA}}=2.19\).
    This reference estimate is not known causal ground truth. 
    The coefficients are expressed as micrometers of carotid IMT per
\(1~\mathrm{mg/dL}\) increase in LDL-C.
    }
    \label{fig:pseudo_instrument_beta_hat}
\end{figure}

We analyzed individual-level data from the Multi-Ethnic Study of
Atherosclerosis (MESA), a population-based prospective cohort designed to
investigate subclinical cardiovascular disease in a racially and ethnically
diverse population \citep{bild2002}. The complete-case analytic sample
included \(n=5{,}602\) participants with measurements of LDL cholesterol,
mean common carotid far-wall intima--media thickness (IMT), relevant
covariates, and candidate genetic instruments. We used these data both to
illustrate the application of TEAM-IV in an empirical setting and to conduct
two complementary, data-anchored validation studies.

Table~\ref{tab:mesa_analysis_structure} summarizes the three components of
the analysis. First, we applied TEAM-IV and sisVIVE to the observed MESA
outcome, exposure, and genotype data. Second, we conducted an adversarial
stress test in which half of the instrument matrix consisted of observed
MESA SNPs and half consisted of outcome-informed, genotype-like
pseudo-instruments constructed to be relevant for LDL cholesterol and
strongly associated with carotid IMT conditional on the exposure and
covariates. Because these pseudo-instruments were constructed using the
observed outcome, this analysis evaluates sensitivity to deliberately
outcome-associated instruments but does not provide complete structural
ground truth. Third, we retained the observed LDL cholesterol measurements,
genotypes, and covariates but generated a semi-synthetic outcome under a
prespecified causal effect and direct-effect vector. This final analysis
provides known ground truth for evaluating both causal-effect estimation and
invalid-instrument identification.

\subsection{MESA sample, phenotypes, genotypes, and covariates}

We used individual-level data from the Multi-Ethnic Study of
Atherosclerosis (MESA), a prospective cohort study designed to investigate
the development and progression of subclinical cardiovascular disease in a
racially and ethnically diverse population \citep{bild2002}. The outcome,
\(\mathbf{y}\), was mean common carotid far-wall intima--media thickness (IMT),
measured by ultrasound. Carotid IMT is a noninvasive marker of subclinical
vascular disease, and greater common carotid IMT has been associated with
higher subsequent risk of cardiovascular events
\citep{lorenz2007,polak2011}. The original MESA measurement was recorded in
millimeters and was multiplied by \(1{,}000\) for analysis in micrometers.

The exposure, \(\mathbf{d}\), was LDL cholesterol (LDL-C), measured in
\(\mathrm{mg/dL}\) at MESA Exam~1. LDL-C was selected because extensive
epidemiologic, genetic, and clinical-trial evidence supports a causal and
cumulative role of LDL particles in atherosclerotic cardiovascular disease
\citep{ference2017}. Candidate instruments were obtained from the
LDL-C genome-wide association studies of \citet{kathiresan2009} and
\citet{teslovich2010}. We matched the reported index variants to the
available MESA genotype data and retained eight SNPs that had been associated
with LDL-C at the conventional genome-wide significance threshold
\(P<5\times10^{-8}\). The retained instruments and their characteristics are
reported in Table~\ref{tab:snp_summary}. Genotypes were coded as effect-allele
counts taking values \(0\), \(1\), or \(2\).

The primary analysis adjusted for age, sex, self-reported race, site,
and the first five genotype-derived ancestry principal components.
We restricted the analysis to participants with complete information on
carotid IMT, LDL-C, all eight instruments, and the adjustment covariates,
yielding a final analytic sample of \(n=5{,}602\). Characteristics of this
complete-case sample are presented in
Table~\ref{tab:sample_characteristics}.

\subsection{Empirical analysis of LDL cholesterol and carotid IMT}
The empirical analysis included \(5{,}602\) MESA participants with
complete data on the exposure, outcome, instruments, and adjustment
covariates. The exposure was Exam 1 LDL cholesterol, measured in
\(\mathrm{mg/dL}\). The outcome was mean common carotid far-wall
intima--media thickness, converted from millimeters
to micrometers by multiplying the recorded value by \(1{,}000\).
The eight genome-wide-significant LDL-C-associated SNPs listed in
Table~\ref{tab:snp_summary} were used as candidate instruments.

TEAM-IV and sisVIVE were each adjusted for age, sex, self-reported
race, MESA study site, and the first five genotype-derived
ancestry principal components. These covariates were included in both
the exposure and outcome equations. Participant characteristics for
the complete-case sample are reported in
Table~\ref{tab:sample_characteristics}.

Both methods retained all eight candidate instruments and estimated a
zero vector of direct SNP effects,
\[
\widehat{\boldsymbol{\alpha}}=\boldsymbol{0}_{8}.
\]
TEAM-IV and sisVIVE therefore produced the same point estimate, after
rounding to two decimal places,
\[
\widehat{\beta}=2.19
\quad
\text{\(\mu\mathrm{m}\) per \(\mathrm{mg/dL}\).}
\]
Under the instrumental-variable assumptions, the estimate corresponds to a \(2.19~\mu\mathrm{m}\) increase in mean common carotid far-wall IMT per \(1~\mathrm{mg/dL}\) increase in LDL-C. Neither implementation provides an analytic standard error or
confidence interval for the post-selection effect estimate; therefore,
we report these results as point estimates only.
For reference, a one-standard-deviation increase in LDL-C
(\(31.5~\mathrm{mg/dL}\)) corresponds to an estimated
\(69.0~\mu\mathrm{m}\) increase in carotid IMT.

The zero direct-effect estimates indicate that neither method selected
any of the eight SNPs as invalid. This finding does not establish that
all eight instruments satisfy the exclusion restriction, because the
true direct effects are unknown in the empirical analysis.

The eight SNPs were jointly associated with LDL-C
(\(F=2.16\), \(P=0.027\)), but explained only \(0.31\%\) of the
residual exposure variation after covariate adjustment
(partial \(R^2=0.0031\)). The joint first-stage \(F\)-statistic was
well below the conventional instrument-strength benchmark of 10, indicating
that weak-instrument bias is an important limitation of the empirical
analysis. Accordingly, the TEAM-IV and sisVIVE estimates are
interpreted as exploratory estimates.

\subsection{Adversarial validation using outcome-informed
pseudo-instruments}

We conducted an adversarial stress test to evaluate whether TEAM-IV and
sisVIVE could identify deliberately constructed, outcome-associated
pseudo-instruments. The observed carotid IMT outcome \(\mathbf{y}\), LDL-C
exposure \(\mathbf{d}\), participant covariates \(\mathbf{C}\), and eight MESA SNPs were
held fixed throughout this analysis. In each replication, we generated
eight new genotype-like pseudo-SNPs and appended them to the eight
observed SNPs. Each augmented instrument matrix therefore contained 16
candidate instruments, of which 50\% were observed MESA SNPs and 50\%
were outcome-informed pseudo-instruments.

\paragraph{Construction of the pseudo-instruments.}

Let
\[
\mathbf{Q}=[\boldsymbol{1},\mathbf{C}],
\]
where \(\mathbf{C}\) contains age, sex, self-reported race/ethnicity, MESA study
site, and the first five genotype-derived ancestry principal
components. We first calculated the portion of LDL-C not explained by
these covariates,
\[
\mathbf{r}_D=\mathbf{M}_\mathbf{Q} \mathbf{d},
\]
and the portion of carotid IMT not explained by LDL-C or the
covariates,
\[
\mathbf{r}_Y=\mathbf{M}_{[\mathbf{Q},\mathbf{d}]}\mathbf{y},
\]
where \(\mathbf{M}_A\) denotes residualization with respect to the column space
of \(\mathbf A\).

The two residual vectors were transformed to standardized rank-normal
scores, denoted by \(\mathbf u_D\) and \(\mathbf u_Y\). Following this transformation,
\(\mathbf{u}_Y\) was residualized with respect to \(\mathbf{u}_D\) and restandardized.
Consequently, \(\mathbf{u}_D\) and \(\mathbf{u}_Y\) were orthogonal in the sample before
discretization, allowing the exposure-relevance and conditional
outcome-association components to be introduced through separate
residual directions.

For participant \(i\) and candidate pseudo-SNP \(j\), a latent score
was generated as
\[
S_{ij}
=
\rho_D u_{D,i}
+
\rho_Y u_{Y,i}
+
\sqrt{1-\rho_D^2-\rho_Y^2}\,e_{ij},
\qquad
e_{ij}\sim N(0,1),
\]
with
\[
\rho_D=0.30
\qquad\text{and}\qquad
\rho_Y=0.20.
\]

The latent score was discretized to obtain a genotype-like dosage
\[
z_{ij}^{\mathrm{ps}}\in\{0,1,2\}.
\]
Specifically, for target minor-allele frequency \(f=0.30\), we defined
\[
c_0=\Phi^{-1}\!\left((1-f)^2\right),
\qquad
c_1=\Phi^{-1}\!\left(1-f^2\right),
\]
and assigned
\[
z_{ij}^{\mathrm{ps}}
=
\begin{cases}
0, & S_{ij}\leq c_0,\\
1, & c_0<S_{ij}\leq c_1,\\
2, & S_{ij}>c_1.
\end{cases}
\]
These thresholds correspond to the Hardy--Weinberg target
probabilities
\[
\Pr(z_{ij}^{\mathrm{ps}}=0)=(1-f)^2,\qquad
\Pr(z_{ij}^{\mathrm{ps}}=1)=2f(1-f),\qquad
\Pr(z_{ij}^{\mathrm{ps}}=2)=f^2.
\]

Each candidate pseudo-SNP was evaluated using two partial regressions.
First, its exposure relevance was assessed using
\[
\mathbf{d}
=
\mathbf{Q}\boldsymbol{\delta}
+
\gamma_j \mathbf{z}_j^{\mathrm{ps}}
+
\boldsymbol{\nu},
\]
and \(F_j^{(\mathbf{d})}\) denoted the one-degree-of-freedom \(F\)-statistic for
testing
\[
H_0:\gamma_j=0.
\]
Second, its conditional outcome association was assessed using
\[
\mathbf{y}
=
\mathbf{Q}\boldsymbol{\eta}
+
\beta \mathbf{d}
+
\alpha_j \mathbf{z}_j^{\mathrm{ps}}
+
\boldsymbol{\varepsilon},
\]
and \(t_j^{(\mathbf{y})}\) denoted the \(t\)-statistic for testing
\[
H_0:\alpha_j=0.
\]

A candidate pseudo-SNP was accepted only when
\[
F_j^{(\mathbf{d})}\geq10,
\qquad
t_j^{(\mathbf{y})}\geq3,
\]
and
\[
\max_{\mathbf{z}_k\in\mathcal C_{j-1}}
\operatorname{Cor}
\left(
\mathbf{z}_j^{\mathrm{ps}},\mathbf{z}_k
\right)^2
\leq0.10,
\]
where \(\mathcal C_{j-1}\) contained all observed MESA SNPs and all
previously accepted pseudo-SNPs. The first criterion ensured empirical
relevance for LDL-C after covariate adjustment. The second imposed a
positive conditional association with carotid IMT after adjustment for
LDL-C and the covariates. The third prevented strong empirical
correlation with either the observed SNPs or previously accepted
pseudo-SNPs. Because the pseudo-SNPs do not represent genomic loci,
this squared-correlation criterion describes empirical dependence and
should not be interpreted as biological linkage disequilibrium.

Let
\[
\mathbf{Z}^{\mathrm{obs}}\in\mathbb{R}^{n\times L}
\]
denote the observed MESA instrument matrix and let
\[
\mathbf{Z}^{\mathrm{ps}}\in\{0,1,2\}^{n\times L_{\mathrm{ps}}}
\]
denote the matrix of accepted pseudo-SNPs. In the present experiment,
\[
L=L_{\mathrm{ps}}=8,
\]
and the augmented candidate-instrument matrix was
\[
\mathbf{Z}^{\mathrm{aug}}
=
\left[
\mathbf{Z}^{\mathrm{obs}},
\mathbf{Z}^{\mathrm{ps}}
\right]
\in\mathbb{R}^{n\times16}.
\]

Although the minimum acceptance threshold was \(F_j^{(\mathbf{d})}=10\), the
realized pseudo-SNPs were considerably stronger. Within each
replication, we calculated the mean and minimum of the eight individual
pseudo-SNP first-stage \(F\)-statistics. Across the 50 replications,
the run-level means averaged 96.91, with an SD of 7.07 and a range of
80.95 to 115.25. The run-level minimum \(F\)-statistics averaged 70.34,
with an SD of 10.65 and a range of 42.33 to 90.33. Thus, the smallest
individual pseudo-SNP first-stage \(F\)-statistic in the entire
experiment was 42.33.

These diagnostics describe the separate covariate-adjusted association
of each pseudo-SNP with LDL-C. They are not the joint first-stage
\(F\)-statistic for \(\mathbf{Z}^{\mathrm{aug}}\), nor the conditional
first-stage \(F\)-statistic for the observed SNPs after adjustment for
the pseudo-SNPs.

\paragraph{Estimation and evaluation.}

We generated 50 augmented instrument matrices using distinct random
seeds. TEAM-IV was fitted using five folds, the
1 \(\operatorname{SE}_{\mathrm{CV}}\) unioning instrument-selection rule. Covariate-adjusted sisVIVE was also fitted using five folds. For each replication, we recorded the estimated
LDL-C coefficient, the proportion of constructed pseudo-SNPs
classified as invalid, and the proportion of observed MESA SNPs
retained as valid.

The TEAM-IV coefficient was a post-selection,
partition-conditioned 2SLS estimate. For replication \(b\), let
\(\widehat V^{(b)}\) and
\({\widehat A}^{(b)}\) denote the sets classified by TEAM-IV
as valid and invalid, respectively. Conditional on this partition, the
final structural equation was
\[
\mathbf y = \theta_0\mathbf 1_n + \beta\mathbf d + \mathbf C\boldsymbol\theta + \mathbf Z_{(\widehat A^{(b)})} \boldsymbol\alpha + \boldsymbol\varepsilon
\]
with instrument matrix
\[
\left[ \mathbf 1_n,
\mathbf{C},\,
\mathbf{Z}_{\widehat A^{(b)}},\,
\mathbf{Z}_{\widehat V^{(b)}}
\right].
\]
Thus, instruments classified as invalid entered the structural
equation as included controls and also appeared in the instrument
matrix, while instruments classified as valid served as excluded
instruments for LDL-C.

Equivalently, TEAM-IV residualized \(\mathbf{y}\), \(\mathbf{d}\), and the selected valid
instruments with respect to
\[
\left[
\boldsymbol{1},\mathbf{C},
\mathbf{Z}_{\widehat A^{(b)}}
\right]
\]
before performing the final 2SLS estimation. Consequently,
\(\widehat{\beta}\) could vary across replications even when the same
eight observed MESA SNPs were consistently retained as valid. Each
replication generated a different set of pseudo-invalid instruments,
which changed the residualized outcome, exposure, and valid-instrument
matrix entering the final 2SLS regression.

For sisVIVE, an instrument was classified as invalid when the absolute
value of its estimated direct-effect coefficient exceeded
\(10^{-8}\). The pseudo-SNP detection rate was defined as the
proportion of constructed pseudo-SNPs classified as invalid. The
observed-SNP retention rate was defined as the proportion of observed
MESA SNPs classified as valid. We use the term ``retention'' rather
than ``specificity'' because the biological validity of the observed
MESA SNPs is unknown.

\paragraph{Results.}

Table~\ref{tab:pseudo_instrument_results} and
Figure~\ref{fig:pseudo_instrument_beta_hat} summarize the results.
TEAM-IV produced a mean post-selection 2SLS estimate of
\[
\widehat{\beta}=1.21,
\]
with an empirical SD of 0.46 across the 50 pseudo-instrument draws.
TEAM-IV classified 100\% of the constructed pseudo-SNPs as invalid and
retained 100\% of the observed MESA SNPs. Although the selected
partition was therefore consistent across replications, the TEAM-IV
effect estimates varied because each newly generated set of
pseudo-invalid instruments entered the final 2SLS structural equation
as a different set of included controls.

In contrast, sisVIVE produced a mean estimate of
\[
\widehat{\beta}=7.77,
\]
with an empirical SD of 0.34. It classified only 1.2\% of the
constructed pseudo-SNPs as invalid while retaining 99.5\% of the
observed MESA SNPs. Most of the deliberately outcome-associated
pseudo-instruments therefore remained in the sisVIVE instrument set,
resulting in a substantial upward shift in its estimated LDL-C
coefficient.

For comparison, the empirical analysis using only the eight observed
MESA SNPs produced
\[
\widehat{\beta}_{\mathrm{MESA}}=2.19.
\]
This empirical estimate is shown by the horizontal reference line in
Figure~\ref{fig:pseudo_instrument_beta_hat}. TEAM-IV estimates remained
substantially closer to this empirical reference than the sisVIVE
estimates, although TEAM-IV exhibited a moderate downward shift from
the reference value.

\paragraph{Interpretation and limitations.}

This experiment should be interpreted as an adversarial,
fixed-outcome stress test rather than as a conventional simulation
with known causal ground truth. The pseudo-SNPs were constructed using
the observed outcome and therefore do not represent inherited genetic
variants or possess known causal direct effects. Their classification
as ``invalid'' refers to their deliberately induced conditional
association with the observed outcome, not to a known biological
violation of the exclusion restriction.

Similarly, the empirical estimate
\(\widehat{\beta}_{\mathrm{MESA}}=2.19\) is a reference point rather
than a known true causal effect. The SDs reported across replications
measure sensitivity to repeated pseudo-instrument construction; they
are not standard errors for the empirical causal estimate.

The design was intentionally challenging in three additional respects.
First, the acceptance rule required
\(t_{\mathbf{y}\mid \mathbf{d},\mathbf{C},G_j}\geq3\), so the constructed conditional outcome
associations were all positive rather than mixed in direction.
Second, exactly 50\% of the candidate instruments were constructed as
invalid. This places the experiment at the boundary of the
majority-valid identification condition associated with sisVIVE,
which ordinarily requires fewer than 50\% of the instruments to be
invalid \citep{kang2016}. Third, the pseudo-SNPs were substantially
stronger predictors of LDL-C than the observed SNP set: the smallest
individual pseudo-SNP \(F\)-statistic was 42.33, whereas the eight
observed MESA SNPs had a joint first-stage \(F\)-statistic of 2.16.
Consequently, this experiment evaluates method behavior under strong,
directionally aligned, outcome-informed pseudo-instruments and should
not be interpreted as a strength-matched comparison with naturally
occurring pleiotropic variants.

\subsection{Semi-synthetic validation with known invalid instruments}

The adversarial pseudo-instrument analysis retained the observed
outcome but constructed outcome-informed instruments. We complemented
that analysis with a semi-synthetic experiment in which the observed
MESA exposure, genotype matrix, and covariates were retained, while the
outcome was generated under a known structural model. This design
preserved the observed distributions and dependence structure of
LDL-C, the eight MESA SNPs, and the adjustment covariates while
providing known values of the causal coefficient, direct SNP effects,
and instrument-validity classifications.

\paragraph{Data-generating mechanism.}

Let \(\mathbf{d}\) denote observed LDL-C, let
\(\mathbf{Z}^{\mathrm{obs}}\in\mathbb{R}^{n\times8}\) denote the observed MESA
SNP matrix, and let \(\mathbf{C}\) denote the covariate matrix. LDL-C and the
covariates were retained on their original scales. Before generating
the outcomes, each SNP column was standardized according to
\[
z_{ij}
=
\frac{
z_{ij}^{\mathrm{obs}}
-
\overline{z}_j^{\mathrm{obs}}
}{
s_j^{\mathrm{obs}}
},
\]
so that each column of \(\mathbf{Z}\) had sample mean zero and sample variance
one. Consequently, the exposure coefficient was expressed per
\(1~\mathrm{mg/dL}\) increase in LDL-C, whereas each direct SNP-effect
coefficient was expressed per one-SD increase in genotype dosage.

For replication \(b\), the synthetic outcome was generated as
\[
y_i^{(b)} = d_i\beta^\star + \mathbf c_i^\top\boldsymbol\theta^\star + \mathbf z_{i\cdot}^\top\boldsymbol\alpha^\star + \epsilon_i^{(b)},
\qquad
\epsilon_i^{(b)}
\stackrel{\mathrm{iid}}{\sim}
N(0,\sigma^2).
\]

The covariate-effect vector and residual noise level were calibrated
from the observed MESA outcome, denoted by \(\mathbf{y}^{\mathrm{obs}}\), before
generating the semi-synthetic outcomes. We estimated the
covariate-effect vector using the ordinary least-squares regression
\[
\mathbf{y}^{\mathrm{obs}}
=
\boldsymbol{\theta}_{0,\mathrm{OLS}}
+
\beta_{\mathrm{OLS}}\mathbf{d}
+
\mathbf{C}\boldsymbol{\theta}_{\mathrm{OLS}}
+
\mathbf{e}_{\mathrm{OLS}}.
\]
The fitted covariate coefficients were used as
\[
\boldsymbol\theta^\star
=
\widehat{\boldsymbol{\theta}}_{\mathrm{OLS}}.
\]
The fitted intercept and LDL-C coefficient were not transferred to the
outcome generator. Omitting the fitted intercept changes only the
location of the generated outcome and not the target slope
coefficients, because the fitted TEAM-IV and sisVIVE models included an
intercept.

The noise SD was calibrated separately using the conventional
all-instrument 2SLS regression
\[
\mathbf{y}^{\mathrm{obs}}
=
\boldsymbol{\theta}_{0,\mathrm{IV}}
+
\beta_{\mathrm{IV}}\mathbf{d}
+
\mathbf{C}\boldsymbol{\theta}_{\mathrm{IV}}
+
\mathbf{e}_{\mathrm{IV}},
\]
in which the eight observed MESA SNPs were used as instruments for
LDL-C and the covariates were included as exogenous regressors. We set
\[
\sigma
=
\widehat{\operatorname{SD}}
\left(
\mathbf{e}_{\mathrm{IV}}
\right)
=
195.1~\mu\mathrm{m}.
\]
The OLS covariate coefficients and IV residual SD were used only as
data-anchored calibration values for the covariate and noise
components. They were not treated as known structural parameters.

The causal coefficient was specified separately as
\[
\beta^\star=2,
\]
corresponding to a \(2~\mu\mathrm{m}\) change in synthetic carotid IMT
per \(1~\mathrm{mg/dL}\) increase in LDL-C. The first five SNPs were
assigned the same positive direct effect,
\[
\alpha_j^\star=30,
\qquad j=1,\ldots,5,
\]
while the remaining three SNPs were assigned
\[
\alpha_j^\star=0,
\qquad j=6,7,8.
\]
Because the SNP columns were standardized,
\(\alpha_j^\star=30\) represents a \(30~\mu\mathrm{m}\) direct change
in the synthetic outcome per one-SD increase in genotype dosage,
outside the pathway through LDL-C.

The known invalid and valid sets were therefore
\[
A^\star=\{1,2,3,4,5\},
\qquad
G^\star=\{6,7,8\},
\]
respectively. Thus, five of the eight candidate instruments, or
62.5\%, were invalid by construction.

We generated 50 outcomes using independent Gaussian error draws. The
observed \(\mathbf{d}\), standardized \(\mathbf{Z}\), covariate matrix \(\mathbf{C}\), causal
coefficient \(\beta^\star\), direct-effect vector \(\alpha^\star\), and
covariate-effect vector \(\theta^\star\) were held fixed across
replications. Variation across replications therefore arose from the
newly generated outcome errors and any seeded cross-validation
partitions, rather than from resampling MESA participants or
regenerating their exposure, covariates, or genotypes.

\paragraph{Estimation procedures.}

TEAM-IV and covariate-adjusted sisVIVE were fitted to every generated
dataset. TEAM-IV used five folds, and the 1 \(\operatorname{SE}_{\mathrm{CV}}\) unioning selection
rule. Adjusted sisVIVE was fitted using its
cross-validated penalty-selection procedure.

The primary TEAM-IV effect estimate was the post-selection,
partition-conditioned 2SLS coefficient. For replication \(b\), let
\(\widehat A^{(b)}\) and
\(\widehat V^{(b)}\) denote the sets classified by TEAM-IV as
invalid and valid. Conditional on this estimated partition, TEAM-IV
fitted the structural equation
\[
\mathbf{y}^{(b)}
=
\beta \mathbf{d}
+
\mathbf{C}\boldsymbol{\theta}
+
\mathbf{Z}_{\widehat A^{(b)}}\boldsymbol{\alpha}
+
\boldsymbol{\varepsilon},
\]
using
\[
\left[
\mathbf{C},\,
\mathbf{Z}_{\widehat A^{(b)}},\,
\mathbf{Z}_{\widehat V^{(b)}}
\right]
\]
as the instrument matrix. Instruments classified as invalid therefore
entered the structural equation as included controls and also appeared
in the instrument matrix, while instruments classified as valid served
as excluded instruments for LDL-C.

For sisVIVE, an instrument was classified as invalid when the absolute
magnitude of its estimated direct-effect coefficient exceeded
\(10^{-8}\). For both methods, the estimated invalid set was compared
with the known set \(A^\star\).

\paragraph{Performance measures.}

For each method, we calculated the estimation error
\[
\widehat{\beta}^{(b)}-\beta^\star,
\]
and summarized causal-effect estimation using
\[
\operatorname{Bias}
=
\frac{1}{B}
\sum_{b=1}^B
\left(
\widehat{\beta}^{(b)}-\beta^\star
\right),\quad
\operatorname{MAE}
=
\frac{1}{B}
\sum_{b=1}^B
\left|
\widehat{\beta}^{(b)}-\beta^\star
\right|, \quad
\operatorname{RMSE}
=
\sqrt{
\frac{1}{B}
\sum_{b=1}^B
\left(
\widehat{\beta}^{(b)}-\beta^\star
\right)^2
}.
\]

Instrument classification was evaluated using sensitivity,
specificity, and positive predictive value (PPV):
\[
\operatorname{Sensitivity}
=
\frac{\mathrm{TP}}{\mathrm{TP}+\mathrm{FN}},
\qquad
\operatorname{Specificity}
=
\frac{\mathrm{TN}}{\mathrm{TN}+\mathrm{FP}},
\qquad
\operatorname{PPV}
=
\frac{\mathrm{TP}}{\mathrm{TP}+\mathrm{FP}}.
\]
Here, a positive classification denotes an instrument classified as
invalid. Classification counts were pooled across the 50
replications. We also recorded estimation and classification failures.

\paragraph{Results.}

All 50 replications were completed successfully for both methods.
TEAM-IV produced a mean estimate of
\[
1.78
\]
with an empirical SD of 3.11. Its median estimate was 2.27, compared
with the true value \(\beta^\star=2\). TEAM-IV had bias
\[
-0.22,
\]
mean absolute error 2.32, and RMSE 3.09.

Covariate-adjusted sisVIVE produced a mean estimate of
\[
3.82
\]
with an empirical SD of 3.14 and a median of 3.82. Its bias was 1.82,
mean absolute error was 2.81, and RMSE was 3.60. Thus, TEAM-IV had
smaller absolute bias, MAE, and RMSE in this majority-invalid setting,
although both methods exhibited substantial replication-to-replication
variability.

TEAM-IV correctly classified all 250 invalid-instrument occurrences
and all 150 valid-instrument occurrences across the 50 replications,
yielding sensitivity, specificity, and PPV of 100\%. Adjusted sisVIVE
also identified all 250 invalid-instrument occurrences and therefore
had sensitivity of 100\%. However, it incorrectly classified 19 of the
150 valid-instrument occurrences as invalid, producing specificity of
87.3\% and PPV of 92.9\%. These false-positive classifications occurred
in 17 of the 50 replications.

\begin{table}[!htbp]
\centering
\caption{Semi-synthetic estimation and instrument-classification
performance across 50 replications.}
\label{tab:semisynthetic_results}

\small
\setlength{\tabcolsep}{3.5pt}
\renewcommand{\arraystretch}{1.20}

\begin{tabularx}{\textwidth}{
    @{}
    >{\raggedright\arraybackslash}p{0.13\textwidth}
    >{\centering\arraybackslash}p{0.09\textwidth}
    >{\centering\arraybackslash}p{0.08\textwidth}
    >{\centering\arraybackslash}p{0.08\textwidth}
    >{\centering\arraybackslash}p{0.08\textwidth}
    >{\centering\arraybackslash}p{0.08\textwidth}
    >{\centering\arraybackslash}p{0.10\textwidth}
    >{\centering\arraybackslash}p{0.10\textwidth}
    >{\centering\arraybackslash}X
    @{}
}
\toprule
\textbf{Method}
& \textbf{Mean \(\widehat{\beta}\)}
& \textbf{SD}
& \textbf{Bias}
& \textbf{MAE}
& \textbf{RMSE}
& \textbf{Sensitivity}
& \textbf{Specificity}
& \textbf{PPV} \\
\midrule

TEAM-IV
& \(1.78\)
& \(3.11\)
& \(-0.22\)
& \(2.32\)
& \(3.09\)
& \(100.0\%\)
& \(100.0\%\)
& \(100.0\%\) \\

sisVIVE
& \(3.82\)
& \(3.14\)
& \(1.82\)
& \(2.81\)
& \(3.60\)
& \(100.0\%\)
& \(87.3\%\)
& \(92.9\%\) \\

\bottomrule
\end{tabularx}

\vspace{0.5em}

\begin{minipage}{\textwidth}
\footnotesize
\textit{Note:}
The true exposure coefficient was \(\beta^\star=2\). Instruments 1--5
were assigned direct effects of \(\alpha_j^\star=30\), while
instruments 6--8 were valid. Sensitivity, specificity, and PPV treat
classification as invalid as the positive result. The observed MESA
exposure, instruments, and covariates were fixed across replications;
only the generated outcome error and seeded estimation procedures
varied.
\end{minipage}
\end{table}

\paragraph{Interpretation and limitations.}

This semi-synthetic experiment provided known causal and
instrument-validity truth while retaining the observed MESA exposure,
genotype, and covariate structure. Under the specified
majority-invalid model, TEAM-IV achieved perfect instrument
classification and smaller estimation error than adjusted sisVIVE.

The experiment nevertheless represents a deliberately challenging
setting. Five of eight instruments were invalid, violating the simple
majority-valid condition under which sisVIVE is ordinarily guaranteed
to identify the causal effect \citep{kang2016}. All five direct effects
also had the same positive magnitude, producing directional rather
than balanced pleiotropy. Consequently, the results should not be
interpreted as describing sisVIVE performance under its standard
identification conditions.

In addition, the observed MESA instruments were weak predictors of
LDL-C, with a joint first-stage \(F\)-statistic of 2.16 for the
original eight-SNP set. After selection, only three instruments were
truly valid in the data-generating model. This weak identifying
variation likely contributed to the substantial empirical SD and RMSE
of both causal-effect estimators, even though TEAM-IV classified the
instruments correctly. Finally, the 50 replications varied the
generated outcome error but did not resample participants. The results
therefore quantify performance conditional on the observed MESA design
matrix rather than repeated-sampling performance across new
populations.

The calibration combined covariate coefficients from an OLS model with
a residual SD from an IV model. Moreover, the IV noise calibration was
based on the eight observed SNPs, which had a weak joint first stage.
Consequently, \(\theta^\star\) and \(\sigma\) should be interpreted as
data-anchored calibration values rather than estimates of the true
structural covariate effects and error variance.

\subsection{Comparison of empirical and validation results}

In the empirical MESA analysis, TEAM-IV and sisVIVE produced the same estimate, \(\widehat{\beta}=2.19\), and both returned \(\widehat{\boldsymbol{\alpha}}=\boldsymbol{0}\). Thus, neither method identified evidence of direct effects among the eight observed SNPs. This agreement should not be interpreted as proof that all instruments were valid, given the weak joint first stage.

The validation analyses revealed substantial differences between the methods. In the fixed-outcome adversarial experiment, TEAM-IV identified all outcome-informed pseudo-instruments, whereas sisVIVE identified only \(1.2\%\). TEAM-IV estimates also remained substantially closer to the empirical reference estimate, although that estimate was not known causal truth. In the semi-synthetic experiment, where \(\beta^\star=2\) and the invalid-instrument set was known, TEAM-IV exhibited smaller bias, MAE, and RMSE than sisVIVE and correctly classified all valid and invalid instruments.

Taken together, these results indicate that TEAM-IV was more robust than sisVIVE to the large proportions of invalid instruments considered in these MESA-anchored experiments. This conclusion is specific to the examined designs: the pseudo-instruments were deliberately outcome-informed and unusually strong, and the invalid-instrument proportions were at or above the majority-valid boundary underlying standard sisVIVE guarantees. Accordingly, the validation results demonstrate comparative performance under challenging, controlled conditions rather than general superiority across all instrumental-variable settings.

\section{Discussion}

Across the simulation settings considered, TEAM-IV generally produced lower estimation error than sisVIVE and CIIV as the proportion of invalid instruments increased. Its advantage was most pronounced when invalid instruments constituted a majority of the candidate set and had relatively large, common-sign direct effects. TEAM-IV also remained comparatively stable under correlated instruments and heterogeneous direct-effect signs, although its advantage over sisVIVE narrowed in several mixed-sign settings. These findings support the robustness of TEAM-IV under the data-generating mechanisms examined, but they do not imply uniform superiority across all instrumental-variable settings.

In the empirical MESA application, TEAM-IV and sisVIVE produced the same estimate of the LDL-C effect and neither method selected any SNP as invalid. This agreement demonstrates that TEAM-IV can produce results consistent with an established invalid-instrument method when both procedures retain the full instrument set. 

Several limitations suggest directions for future research. The principal simulations used ($n=2000$) observations and ($L=10$) candidate instruments under a restricted collection of instrument covariance and effect-size structures. Further work should examine settings in which the number of candidate instruments is large relative to the sample size or increases with the sample size. Additional priorities include theoretical guarantees for instrument selection, valid uncertainty quantification following instrument selection, robustness to very weak instruments, and complex linkage disequilibrium.

Overall, the results show that penalized instrument-selection methods, combined with post-selection, partition-conditioned 2SLS estimation, can provide substantial robustness to direct instrument effects. TEAM-IV appears particularly promising in settings where invalid instruments are sufficiently numerous to violate the majority-valid condition required by existing methods.

\section{Appendix A: Construction of the adjusted exposure}
\label{app:dadj}

Define
\[
\mathbf P_{\mathbf Z}
=
\mathbf Z(\mathbf Z^\top\mathbf Z)^{-1}\mathbf Z^\top,
\qquad
\mathbf z_{\mathrm{sum}}=\mathbf Z\mathbf 1_L.
\]
Decompose the exposure vector into its component in $\mathrm{col}(\mathbf Z)$ and its orthogonal residual:
\[
\mathbf d
=
\mathbf d_{\mathrm{colZ}}+\mathbf d_\perp,
\qquad
\mathbf d_{\mathrm{colZ}}=\mathbf P_{\mathbf Z}\mathbf d,
\qquad
\mathbf d_\perp=(\mathbf I-\mathbf P_{\mathbf Z})\mathbf d.
\]
Define
\[
\hat\gamma_{\mathrm{sum}}
\equiv
\arg\min_{\gamma\in\mathbb R}
\|\mathbf d_{\mathrm{colZ}}-\gamma\,\mathbf z_{\mathrm{sum}}\|_2^2
=
\frac{\mathbf z_{\mathrm{sum}}^\top\mathbf d_{\mathrm{colZ}}}
{\mathbf z_{\mathrm{sum}}^\top\mathbf z_{\mathrm{sum}}},
\]
and set
\[
\mathbf d_{\mathrm{adj}}
\equiv
\mathbf d_\perp+\hat\gamma_{\mathrm{sum}}\mathbf z_{\mathrm{sum}}.
\]
Since $\mathbf z_{\mathrm{sum}}\in\mathrm{col}(\mathbf Z)$ and
$\mathbf d_\perp\perp\mathrm{col}(\mathbf Z)$, this construction satisfies
\[
\mathbf P_{\mathbf Z}\mathbf d_{\mathrm{adj}}
=
\hat\gamma_{\mathrm{sum}}\mathbf z_{\mathrm{sum}},
\qquad
(\mathbf I-\mathbf P_{\mathbf Z})\mathbf d_{\mathrm{adj}}
=
\mathbf d_\perp
=
(\mathbf I-\mathbf P_{\mathbf Z})\mathbf d.
\]
Thus, $\mathbf d_{\mathrm{adj}}$ preserves the component of $\mathbf d$ orthogonal to
$\mathrm{col}(\mathbf Z)$ while replacing the projected exposure
$\mathbf P_{\mathbf Z}\mathbf d$ by its least-squares approximation in the one-dimensional subspace
$\mathrm{span}(\mathbf Z\mathbf 1_L)$. This corresponds to a projected exposure with common relative loading across the oriented instrument columns.

In implementation, we may flip the sign of individual columns of $\mathbf Z$ to enforce a common orientation of the estimated first-stage coefficients. Such sign flips do not change $\mathrm{col}(\mathbf Z)$, and hence do not change $\mathbf P_{\mathbf Z}$, but they do determine the oriented sum direction
$\mathbf z_{\mathrm{sum}}=\mathbf Z\mathbf 1_L$.
We denote the adjusted projected exposure by
\[
\widehat{\mathbf d}_{\mathrm{hom}}
:=
\mathbf P_{\mathbf Z}\mathbf d_{\mathrm{adj}}.
\]
Since $\mathbf z_{\mathrm{sum}}\in\mathrm{col}(\mathbf Z)$ and
$\mathbf d_\perp\perp\mathrm{col}(\mathbf Z)$, we have
\[
\widehat{\mathbf d}_{\mathrm{hom}}
=
\mathbf P_{\mathbf Z}\mathbf d_{\mathrm{adj}}
=
\hat\gamma_{\mathrm{sum}}\mathbf z_{\mathrm{sum}}.
\]

A motivating example showing how heterogeneous first-stage strengths can distort MCP direct-effect fitting is given in Supplementary Note S1.

\section{Supplementary Note S1: Motivating Example for $\mathbf d_{\mathrm{adj}}$}
\label{supp:dadj-motivation}
This note gives a simple deterministic construction showing how heterogeneous first-stage strength can cause an MCP-penalized direct-effect fit to mask a truly invalid instrument.

Consider the MCP-penalized objective with \emph{per-observation} squared-error loss
\[
Q(\boldsymbol\alpha,\beta)
=\frac{1}{2n}\bigl\|\mathbf y - \mathbf Z\boldsymbol\alpha - \mathbf P_{\mathbf Z}\mathbf d\,\beta\bigr\|_2^2
\;+\;\sum_{j=1}^L p_{\lambda,\psi}(\alpha_j),
\]

where $p_{\lambda,\psi}$ is the minimax concave penalty (MCP).

Assume the data satisfy the (constant-effects) linear IV model
\[
\mathbf d = \mathbf Z\boldsymbol\gamma^\star + \boldsymbol\nu,
\qquad
\mathbf y = \mathbf d\,\beta^\star + \mathbf Z\boldsymbol\alpha^\star + \boldsymbol\epsilon,
\]
with no stochastic assumptions (all bounds below are deterministic).

Assume that only instrument \(1\) is invalid:

$$
\boldsymbol\alpha^\star=(a,0,\ldots,0)^\top,\qquad a\neq0,
$$

and that instrument \(1\) is much stronger than the others:

$$
|\gamma_1^\star|=M_1\to\infty,
\qquad
|\gamma_j^\star|\le M_2,\quad j\ge2,
$$
for fixed $M_2<\infty$.

We use two elementary properties of the MCP penalty
\begin{enumerate}
\item (Lipschitz) $\;|p_{\lambda,\psi}(u)-p_{\lambda,\psi}(v)| \le \lambda |u-v|$ for all $u,v$;
\item (Positive at nonzero) $p_{\lambda,\psi}(a)>0$ whenever $a\neq 0$.
\end{enumerate}

Define the competing parameters

\[
\delta:=\frac{a}{\gamma_1^\star},
\qquad
\beta^{(c)}:=\beta^\star+\delta,
\qquad
\boldsymbol\alpha^{(c)}
:=
\boldsymbol\alpha^\star-\delta\boldsymbol\gamma^\star.
\]

Then

\[
\alpha_1^{(c)}=a-\delta\gamma_1^\star=0.
\]

Thus the competitor treats the truly invalid first instrument as valid by offsetting its direct effect through a change in \(\beta\).
Let
\[
\mathbf r^\star
:=
\mathbf y-\mathbf Z\boldsymbol\alpha^\star-\mathbf P_{\mathbf Z}\mathbf d\,\beta^\star.
\]
Since \((\mathbf I-\mathbf P_{\mathbf Z})\mathbf Z\boldsymbol\gamma^\star=0\),
\[
\mathbf r^\star
=
\boldsymbol\epsilon+\beta^\star(\mathbf I-\mathbf P_{\mathbf Z})\boldsymbol\nu.
\]
The competitor residual is
\[
\mathbf r^{(c)}
:=
\mathbf y-\mathbf Z\boldsymbol\alpha^{(c)}
-\mathbf P_{\mathbf Z}\mathbf d\,\beta^{(c)}
=
\mathbf r^\star-\delta\mathbf P_{\mathbf Z}\boldsymbol\nu.
\]
Therefore,
\begin{align}
\frac{1}{2n}\|\mathbf r^{(c)}\|_2^2
-
\frac{1}{2n}\|\mathbf r^\star\|_2^2
&=
-\frac{\delta}{n}
\langle \mathbf r^\star,\mathbf P_{\mathbf Z}\boldsymbol\nu\rangle
+
\frac{\delta^2}{2n}
\|\mathbf P_{\mathbf Z}\boldsymbol\nu\|_2^2
\nonumber\\
&\le
\frac{|\delta|}{n}
\|\mathbf r^\star\|_2
\|\mathbf P_{\mathbf Z}\boldsymbol\nu\|_2
+
\frac{\delta^2}{2n}
\|\mathbf P_{\mathbf Z}\boldsymbol\nu\|_2^2 .
\label{eq:supp-loss-bound}
\end{align}
For the penalty part,
\begin{align}
\sum_{j=1}^L p_{\lambda,\psi}(\alpha_j^{(c)})
-
\sum_{j=1}^L p_{\lambda,\psi}(\alpha_j^\star)
&=
-p_{\lambda,\psi}(a)
+
\sum_{j=2}^L
\{p_{\lambda,\psi}(-\delta\gamma_j^\star)-p_{\lambda,\psi}(0)\}
\nonumber\\
&\le
-p_{\lambda,\psi}(a)
+
\lambda|\delta|\sum_{j=2}^L|\gamma_j^\star|
\nonumber\\
&\le
-p_{\lambda,\psi}(a)
+
\lambda|\delta|(L-1)M_2 .
\label{eq:supp-penalty-bound}
\end{align}
Combining \eqref{eq:supp-loss-bound}--\eqref{eq:supp-penalty-bound} and using
\(|\delta|=|a|/M_1\) gives
\[
Q(\boldsymbol\alpha^{(c)},\beta^{(c)})
-
Q(\boldsymbol\alpha^\star,\beta^\star)
\le
-p_{\lambda,\psi}(a)
+
O(M_1^{-1})
+
O(M_1^{-2}),
\]
for fixed \((\mathbf Z,\boldsymbol\nu,\boldsymbol\epsilon)\), fixed \(n,L,\lambda,\psi,a,M_2\), and
\(M_1=|\gamma_1^\star|\to\infty\). Hence, for sufficiently large \(M_1\),
\[
Q(\boldsymbol\alpha^{(c)},\beta^{(c)})
<
Q(\boldsymbol\alpha^\star,\beta^\star).
\]
This example shows that when one invalid instrument has a sufficiently large first-stage loading, the penalized objective prefers a fit that sets its direct-effect coefficient to zero and compensates through a small shift in \(\beta\). The adjustment \(\mathbf d_{\mathrm{adj}}\) is designed to reduce this mechanism by replacing the heterogeneous projected exposure \(\mathbf P_{\mathbf Z}\mathbf d\) with a projected exposure whose instrument-specific loadings are homogenized along \(\mathrm{span}(\mathbf Z\mathbf 1_L)\).

The same argument applies if a fixed subset of instruments \(j\ge2\) are also invalid, provided their first-stage loadings remain bounded as \(M_1\to\infty\) and their direct effects are fixed. The competitor changes their direct-effect coefficients by \(-\delta\gamma_j^\star=O(M_1^{-1})\), so the induced penalty changes are \(O(M_1^{-1})\), while the penalty decrease from setting \(\alpha_1\) to zero remains the fixed quantity \(p_{\lambda,\psi}(a)>0\).

\section{Appendix B: Ridge-ordering places valid instruments in contiguous block}
\subsection{Symmetric population regime}

This appendix studies population conditions under which the ridge-ordering step places valid instruments together in the ordering induced by the ridge direct-effect estimates. The analysis is conducted under a simplified symmetric population regime designed to isolate the behavior of the ordering step.

First, the instrument-strength coefficients are assumed to be homogeneous,
\[
\bg^\star
=
\gamma_0 \mathbf 1_L .
\]

Second, the instrument second-moment matrix is assumed to be compound symmetric,
\[
\boldsymbol\Sigma
=
\mathbb E[\z\z^\top]
=
(1-\rho)\mathbf I_L
+
\rho\,\mathbf 1_L\mathbf 1_L^\top,
\qquad
\rho\in[0,1).
\]

Third, the direct effects are assumed to take a fixed finite number of invalid levels. Specifically, let \(V^\star\) denote the valid instrument set and let
\[
A^\star=A_1\cup\cdots\cup A_m
\]
denote a partition of the invalid instrument set into \(m\) direct-effect classes, where \(m\) is fixed. For \(j\in V^\star\),
\[
\alpha_j^\star=0,
\]
while for \(j\in A_r\),
\[
\alpha_j^\star=\tau_r,
\qquad r=1,\ldots,m.
\]
The invalid direct-effect levels are assumed to be separated from zero:
\[
\Delta_\star
:=
\min_{1\le r\le m}|\tau_r|
>0.
\]

In addition, the TEAM-IV construction replaces the projected exposure by a homogenized projected exposure satisfying
\[
D_Z^{\mathrm{hom}} =
\kappa\,\mathbf z^\top \mathbf 1_L
\qquad
\text{for some } \kappa\in\mathbb R.\]
This is not an additional modeling assumption, but rather a property of the adjusted exposure construction described in Section~\ref{homexposure}.

Under these conditions, the population ridge target is shown to be constant within each direct-effect class. Moreover, the population ridge level for the valid instruments differs from the population ridge level for every invalid direct-effect class. Consequently, the population ordering induced by the ridge target places the valid instruments into a contiguous block; under standard consistency of the sample ridge estimates, the finite-sample ordering used by TEAM-IV converges to this population ordering.

\subsection{Population ridge target under a single invalid direct-effect level}

We first consider the special case in which all invalid instruments share a common direct-effect level. Thus \(m=1\), \(A^\star=A_1\), and \(\tau=\tau_1\). The extension to a fixed finite number of invalid direct-effect levels is given in the following subsection.

Let
\[
(\boldsymbol\alpha^{\mathrm{ridge},\star},
\beta^{\mathrm{ridge},\star})
=
\arg\min_{\boldsymbol\alpha,\beta}
\left\{
\frac12
\mathbb E
\!\left[
\left(
Y-\mathbf z^\top\boldsymbol\alpha
-
D_Z^{\mathrm{hom}}\beta
\right)^2
\right]
+
\lambda \|\boldsymbol\alpha\|_2^2
\right\},
\]
where \(D_Z^{\mathrm{hom}}=\kappa \mathbf{z}^\top \one_L\) is the homogenized projected exposure described in Section~\ref{homexposure}.

\begin{proposition}
Assume the symmetric population regime of Section B.1 in the special case \(m=1\). Then there exist constants \(c_V\) and \(c_A\) such that
\[
\alpha_j^{\mathrm{ridge},\star}
=
c_V,
\qquad
j\in V^\star,
\]
and
\[
\alpha_j^{\mathrm{ridge},\star}
=
c_A,
\qquad
j\in A^\star.
\]
Moreover,
\[
c_A-c_V
=
\frac{1-\rho}{1-\rho+2\lambda}\tau.
\]
Consequently, since \(\rho\in[0,1)\), \(\lambda\ge0\), and \(\tau\neq0\),
\[
|c_A-c_V|
=
\frac{1-\rho}{1-\rho+2\lambda}|\tau|
>0.
\]
\end{proposition}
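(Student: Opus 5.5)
The plan is to reduce the population ridge problem to a linear system in a single reparametrized coefficient vector, and then exploit the compound-symmetric structure of \(\boldsymbol\Sigma\).

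\textbf{Step 1: population moments.} Under the structural model \eqref{DGP} with \(\boldsymbol\gamma^\star=\gamma_0\mathbf 1_L\), write \(Y=\mathbf z^\top\boldsymbol\theta^\star+\beta^\star\nu+\epsilon\), where \(\boldsymbol\theta^\star:=\gamma_0\beta^\star\mathbf 1_L+\boldsymbol\alpha^\star\). The moment conditions \(\mathbb E[\mathbf z\epsilon]=\mathbf 0\) and \(\mathbb E[\mathbf z\nu]=\mathbf 0\) then give \(\mathbb E[\mathbf zY]=\boldsymbol\Sigma\boldsymbol\theta^\star\). Because \(D_Z^{\mathrm{hom}}=\kappa\,\mathbf z^\top\mathbf 1_L\), the fitted value is \(\mathbf z^\top\boldsymbol\theta\) with \(\boldsymbol\theta:=\boldsymbol\alpha+\kappa\beta\mathbf 1_L\). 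Hence the squared-loss part equals \(\tfrac12(\boldsymbol\theta-\boldsymbol\theta^\star)^\top\boldsymbol\Sigma(\boldsymbol\theta-\boldsymbol\theta^\star)\) plus a constant.

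\textbf{Step 2: first-order conditions.} The objective is convex in \((\boldsymbol\alpha,\beta)\), so I would write down its stationarity equations. In \(\boldsymbol\alpha\) they read \(\boldsymbol\Sigma(\boldsymbol\theta-\boldsymbol\theta^\star)+2\lambda\boldsymbol\alpha=\mathbf 0\). In \(\beta\) they read \(\kappa\,\mathbf 1_L^\top\boldsymbol\Sigma(\boldsymbol\theta-\boldsymbol\theta^\star)=0\). Using \(\boldsymbol\Sigma=(1-\rho)\mathbf I_L+\rho\mathbf 1_L\mathbf 1_L^\top\), the vector \(\boldsymbol\Sigma(\boldsymbol\theta-\boldsymbol\theta^\star)\) equals \((1-\rho)(\boldsymbol\alpha-\boldsymbol\alpha^\star)\) plus a scalar multiple of \(\mathbf 1_L\). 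All the \(\beta\)- and \(\gamma_0\)-dependence, and all the \(\rho\)-cross terms, are absorbed into that scalar. The \(\boldsymbol\alpha\)-equation therefore becomes
\[
(1-\rho+2\lambda)\,\boldsymbol\alpha=(1-\rho)\,\boldsymbol\alpha^\star-k\,\mathbf 1_L
\]
for some scalar \(k\). This scalar could be pinned down via the \(\beta\)-equation; for \(\lambda>0\) that equation together with the \(\boldsymbol\alpha\)-equation forces \(\mathbf 1_L^\top\boldsymbol\alpha=0\). Its value is not needed, however.

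\textbf{Step 3: read off the class levels.} Since \(\alpha_j^\star\) is \(0\) on \(V^\star\) and \(\tau\) on \(A^\star\), the displayed identity gives \(\alpha_j^{\mathrm{ridge},\star}=-k/(1-\rho+2\lambda)=:c_V\) for every \(j\in V^\star\). It gives \(\alpha_j^{\mathrm{ridge},\star}=[(1-\rho)\tau-k]/(1-\rho+2\lambda)=:c_A\) for every \(j\in A^\star\). Subtracting yields \(c_A-c_V=\frac{1-\rho}{1-\rho+2\lambda}\tau\), which is nonzero because \(\rho<1\), \(\lambda\ge0\) and \(\tau\neq0\).

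\textbf{Main obstacle: well-posedness of the minimizer.} For \(\lambda>0\) and \(\kappa\neq0\), the objective is strictly convex in \(\boldsymbol\alpha\). Given \(\boldsymbol\alpha\), \(\beta\) is determined because \(\boldsymbol\Sigma\) is positive definite when \(\rho\in[0,1)\); so the minimizer is unique and Step 2 characterizes it.

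The case \(\lambda=0\) needs separate handling, because \(\boldsymbol\alpha\) and \(\kappa\beta\mathbf 1_L\) are then collinear. Every minimizer satisfies \(\boldsymbol\theta=\boldsymbol\theta^\star\), so \(\boldsymbol\alpha=\boldsymbol\alpha^\star+c\,\mathbf 1_L\) for some scalar \(c\). The class structure and the difference \(\tau\) (matching the formula at \(\lambda=0\)) hold for any selected minimizer.

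The degenerate case \(\kappa=0\) also deserves a remark. There the \(\beta\)-equation is vacuous and \(\boldsymbol\alpha=(\boldsymbol\Sigma+2\lambda\mathbf I_L)^{-1}\boldsymbol\Sigma\boldsymbol\theta^\star\). The eigenstructure of the compound-symmetric matrix still gives constancy within classes, but the \(\mathbf 1_L\)-component no longer cancels. I would therefore state \(\kappa\neq0\) explicitly, noting that it holds whenever the homogenized first stage is nontrivial.
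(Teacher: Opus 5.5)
Your argument is correct, and it reaches the result by a somewhat different route than the paper. The paper first gets the class constants $c_V,c_A$ from symmetry: the population objective is invariant under permutations within $V^\star$ and within $A^\star$, and the paper asserts that the objective is strictly convex in $(\boldsymbol\alpha,\beta)$, so the unique minimizer must also be invariant. Only then does it subtract a valid and an invalid coordinate of the stationarity condition $(\boldsymbol\Sigma+2\lambda\mathbf I_L)\boldsymbol\alpha+\kappa\boldsymbol\Sigma\mathbf 1_L\beta=\mathbb E[\mathbf zY]$. In that difference the exposure and first-stage contributions, being multiples of $\mathbf 1_L$, cancel, giving $(1-\rho+2\lambda)(c_A-c_V)=(1-\rho)\tau$.

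Your reparametrization $\boldsymbol\theta=\boldsymbol\alpha+\kappa\beta\mathbf 1_L$ turns the same stationarity condition into the vector identity $(1-\rho+2\lambda)\boldsymbol\alpha=(1-\rho)\boldsymbol\alpha^\star-k\mathbf 1_L$. This gives class-constancy and the gap in one stroke, and it holds for every minimizer, so no symmetry or uniqueness argument is needed. That is a real gain. The paper's strict-convexity claim fails at $\lambda=0$, where $\boldsymbol\alpha$ and $\kappa\beta\mathbf 1_L$ are collinear, yet the proposition's final display is stated for $\lambda\ge0$; your treatment covers $\lambda=0$ for any selected minimizer. Your $\beta$-equation also yields $\mathbf 1_L^\top\boldsymbol\alpha=0$ when $\lambda>0$ and $\kappa\neq0$. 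That gives the explicit levels $c_V=-\tfrac{|A^\star|}{L}g$ and $c_A=\tfrac{|V^\star|}{L}g$ with $g=\tfrac{1-\rho}{1-\rho+2\lambda}\tau$, which the paper does not state. The symmetry route is more conceptual and would transfer to other exchangeable designs. Here, though, the coordinate-difference computation already does all the work, and the paper relies on exactly that computation in its extension to finitely many invalid levels.

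One small correction: the extra hypothesis $\kappa\neq0$ is not needed for the stated conclusion. When $\kappa=0$, the $\boldsymbol\alpha$-stationarity equation is $\boldsymbol\Sigma(\boldsymbol\alpha-\boldsymbol\theta^\star)+2\lambda\boldsymbol\alpha=\mathbf 0$. Since $\boldsymbol\theta^\star-\boldsymbol\alpha^\star=\gamma_0\beta^\star\mathbf 1_L$, compound symmetry again gives $(1-\rho+2\lambda)\boldsymbol\alpha=(1-\rho)\boldsymbol\alpha^\star+k'\mathbf 1_L$, so the within-class constancy and the gap are unchanged. What you lose is only the normalization $\mathbf 1_L^\top\boldsymbol\alpha=0$ and uniqueness of $\beta$, while $\boldsymbol\alpha^{\mathrm{ridge},\star}$ remains unique for $\lambda>0$. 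You need $\kappa\neq0$ only if you want the full argmin in $(\boldsymbol\alpha,\beta)$ to be single-valued.
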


In conclusion, ordering the instruments by the population ridge target places all valid instruments consecutively in the ordering. Standard consistency arguments imply that the sample ridge estimates \(\{\hat\alpha_j\}_{j=1}^L\) inherit this ordering asymptotically.

\begin{proof}
Under the symmetric regime, the population ridge objective is invariant under permutations within \(V^\star\) and within \(A^\star\). Since the ridge objective is strictly convex in \((\boldsymbol\alpha,\beta)\), its minimizer is unique. Therefore the minimizer is also invariant to within-class permutations: there exist constants \(c_V\) and \(c_A\) such that
\[
\alpha_j^{\mathrm{ridge},\star}=c_V,
\qquad j\in V^\star,
\]
and
\[
\alpha_j^{\mathrm{ridge},\star}=c_A,
\qquad j\in A^\star.
\]

The population first-order condition for \(\boldsymbol\alpha\), evaluated at the minimizer, is
\[
(\boldsymbol\Sigma+2\lambda \mathbf I_L)\boldsymbol\alpha^{\mathrm{ridge},\star}
+
\mathbf q\beta^{\mathrm{ridge},\star}
=
\mathbf m,
\]
where
\[
\mathbf q=\mathbb E[\mathbf zD_Z^{\mathrm{hom}}],
\qquad
\mathbf m=\mathbb E[\mathbf zY].
\]
By construction, there exists a scalar \(\kappa\) such that
\[
D_Z^{\mathrm{hom}}=\kappa \mathbf z^\top\mathbf 1_L.
\]
Thus
\[
\mathbf q
=
\mathbb E[\mathbf zD_Z^{\mathrm{hom}}]
=
\kappa\mathbb E[\mathbf z\mathbf z^\top]\mathbf 1_L
=
\kappa\boldsymbol\Sigma\mathbf 1_L.
\]
Since
\[
\boldsymbol\Sigma
=
(1-\rho)\mathbf I_L+\rho \mathbf 1_L\mathbf 1_L^\top,
\]
we have
\[
\boldsymbol\Sigma\mathbf 1_L
=
\{1-\rho+\rho L\}\mathbf 1_L.
\]
Therefore \(\mathbf q\beta^{\mathrm{ridge},\star}\) contributes the same value to every coordinate.

Next, using
\[
Y=\beta^\star D+\mathbf z^\top\boldsymbol\alpha^\star+\epsilon,
\qquad
\mathbb E[\mathbf z\epsilon]=\zero_L,
\]
we obtain
\[
\mathbf m
=
\mathbb E[\mathbf zY]
=
\beta^\star \mathbb E[\mathbf zD]
+
\mathbb E[\mathbf z\mathbf z^\top]\boldsymbol\alpha^\star.
\]
Since
\[
\mathbb E[\mathbf zD]
=
\boldsymbol\Sigma\boldsymbol\gamma^\star
\]
and
\[
\boldsymbol\gamma^\star=\gamma_0\mathbf 1_L,
\]
the first term is proportional to \(\mathbf 1_L\). Hence, when subtracting a valid-coordinate first-order condition from an invalid-coordinate first-order condition, the common exposure and first-stage terms cancel.

Let \(j\in A^\star\) and \(k\in V^\star\). For any vector \(\mathbf u \in \mathbb{R}^L\),
\[
[(\boldsymbol\Sigma+2\lambda \mathbf I_L)\mathbf u]_\ell
=
(1-\rho)u_\ell
+
\rho\,\mathbf 1_L^\top\mathbf u
+
2\lambda u_\ell.
\]
Applying this with
\(\mathbf u=\boldsymbol\alpha^{\mathrm{ridge},\star}\), we get
\[
\left[
(\boldsymbol\Sigma+2\lambda \mathbf I_L)
\boldsymbol\alpha^{\mathrm{ridge},\star}
\right]_j
-
\left[
(\boldsymbol\Sigma+2\lambda \mathbf I_L)
\boldsymbol\alpha^{\mathrm{ridge},\star}
\right]_k
=
(1-\rho+2\lambda)(c_A-c_V),
\]
because the common term
\(\rho\,\mathbf 1_L^\top\boldsymbol\alpha^{\mathrm{ridge},\star}\)
cancels.

It remains to compute the corresponding right-hand side difference. Since
\[
\boldsymbol\alpha^\star=\tau\mathbf 1_{A^\star},
\]
the only non-common part of \(\mathbf m\) is
\[
\boldsymbol\Sigma\boldsymbol\alpha^\star
=
\boldsymbol\Sigma(\tau\mathbf 1_{A^\star}).
\]
Thus
\[
m_j-m_k
=
\left[
\boldsymbol\Sigma(\tau\mathbf 1_{A^\star})
\right]_j
-
\left[
\boldsymbol\Sigma(\tau\mathbf 1_{A^\star})
\right]_k.
\]
Because \(j\in A^\star\) and \(k\in V^\star\),
\[
\left[
\boldsymbol\Sigma(\tau\mathbf 1_{A^\star})
\right]_j
=
\tau(1-\rho+\rho |A^\star|),
\qquad
\left[
\boldsymbol\Sigma(\tau\mathbf 1_{A^\star})
\right]_k
=
\tau\rho |A^\star|.
\]
Therefore
\[
m_j-m_k
=
(1-\rho)\tau.
\]

Equating the coordinate differences in the first-order condition gives
\[
(1-\rho+2\lambda)(c_A-c_V)
=
(1-\rho)\tau.
\]
Hence
\[
c_A-c_V
=
\frac{1-\rho}{1-\rho+2\lambda}\tau.
\]
Since \(\rho\in[0,1)\), \(\lambda\ge0\), and \(\tau\neq0\),
\[
|c_A-c_V|
=
\frac{1-\rho}{1-\rho+2\lambda}|\tau|
>
0.
\]
\end{proof}
\subsection{Extension to finitely many invalid direct-effect levels}

The preceding separation calculation extends directly to any fixed finite number of invalid direct-effect levels. Suppose that
\[
A^\star=A_1\cup\cdots\cup A_m,
\]
where \(m\) is fixed, and
\[
\alpha_j^\star=\tau_r
\qquad
\text{for } j\in A_r,\quad r=1,\ldots,m.
\]
Assume the invalid levels are separated from the valid level \(0\):
\[
\Delta_\star
:=
\min_{1\le r\le m}|\tau_r|
>0.
\]

Under the symmetric population regime, let \(k\in V^\star\) and \(j\in A_r\). The population first-order condition for
\(\boldsymbol\alpha^{\mathrm{ridge},\star}\) implies
\[
\left[
(\boldsymbol\Sigma+2\lambda \mathbf I_L)
\boldsymbol\alpha^{\mathrm{ridge},\star}
\right]_j
-
\left[
(\boldsymbol\Sigma+2\lambda \mathbf I_L)
\boldsymbol\alpha^{\mathrm{ridge},\star}
\right]_k
=
m_j-m_k,
\]
after cancellation of the common exposure and first-stage terms. Since
\[
\boldsymbol\Sigma
=
(1-\rho)\mathbf I_L+\rho \mathbf 1_L\mathbf 1_L^\top,
\]
the common \(\rho \mathbf 1_L\mathbf 1_L^\top\) contribution cancels across coordinates, giving
\[
\left[
(\boldsymbol\Sigma+2\lambda \mathbf I_L)
\boldsymbol\alpha^{\mathrm{ridge},\star}
\right]_j
-
\left[
(\boldsymbol\Sigma+2\lambda \mathbf I_L)
\boldsymbol\alpha^{\mathrm{ridge},\star}
\right]_k
=
(1-\rho+2\lambda)
\left(
\alpha_j^{\mathrm{ridge},\star}
-
\alpha_k^{\mathrm{ridge},\star}
\right).
\]
Similarly, because \(\alpha_k^\star=0\) and \(\alpha_j^\star=\tau_r\),
\[
m_j-m_k
=
(1-\rho)(\alpha_j^\star-\alpha_k^\star)
=
(1-\rho)\tau_r.
\]
Therefore
\[
(1-\rho+2\lambda)
\left(
\alpha_j^{\mathrm{ridge},\star}
-
\alpha_k^{\mathrm{ridge},\star}
\right)
=
(1-\rho)\tau_r,
\]
so
\[
\alpha_j^{\mathrm{ridge},\star}
-
\alpha_k^{\mathrm{ridge},\star}
=
\frac{1-\rho}{1-\rho+2\lambda}\tau_r.
\]

Writing \(c_V\) for the common valid-class ridge level and \(c_{A,r}\) for the ridge level of invalid class \(A_r\), we obtain
\[
c_{A,r}-c_V
=
\frac{1-\rho}{1-\rho+2\lambda}\tau_r.
\]
Hence
\[
\min_{j\in A^\star}
\left|
\alpha_j^{\mathrm{ridge},\star}
-
c_V
\right|
=
\min_{1\le r\le m}|c_{A,r}-c_V|
=
\frac{1-\rho}{1-\rho+2\lambda}
\min_{1\le r\le m}|\tau_r|.
\]
Since \(\rho\in[0,1)\), \(\lambda\ge 0\), and \(\Delta_\star>0\),
\[
\min_{j\in A^\star}
\left|
\alpha_j^{\mathrm{ridge},\star}
-
c_V
\right|
=
\frac{1-\rho}{1-\rho+2\lambda}
\Delta_\star
>0.
\]
Thus, for any fixed finite number of invalid direct-effect levels, a structural gap away from zero induces a positive population ridge-level gap between valid and invalid instruments.

\section{Window-level sign coherence of valid instruments}
\label{app:window-sign-coherence}

This appendix provides a theoretical justification for aggregating
sign-with-tolerance patterns across overlapping window-specific MCP fits.
The central result shows that the estimated direct-effect coefficients of
truly valid instruments appearing in the same window asymptotically receive
a common sign-with-tolerance classification. This conclusion permits the
valid coefficients to be jointly zero or jointly nonzero, thereby allowing
for cluster-induced shifts of the MCP solution.

\subsection{Window-specific objective and common-shift path}

Fix a window \(W\in\mathcal W\) of size \(w:=|W|\). Define the
window-specific MCP objective by
\[
Q_{n,W}(\boldsymbol\alpha_W,\beta)
:=
\frac{1}{2n}
\left\|
\mathbf y
-
\mathbf Z_W\boldsymbol\alpha_W
-
\mathbf P_{\mathbf Z_W}
\widehat{\mathbf d}_{\mathrm{hom}}\beta
\right\|_2^2
+
\sum_{j\in W}
p_{\lambda_n,\psi}(\alpha_{W,j}).
\]
\subsection{Common-shift geometry and localization conditions}
\label{app:window-sign-assumptions}

Fix a window \(W\in\mathcal W\) of size $w:=|W|$, and let
\[
\mathbf Z_W\in\mathbb R^{n\times w}
\]
denote the corresponding instrument submatrix. Define the window-specific
adjusted first-stage coefficient vector by
\[
\widehat{\boldsymbol\gamma}_W
:=
\left(
\mathbf Z_W^\top\mathbf Z_W
\right)^{-1}
\mathbf Z_W^\top
\widehat{\mathbf d}_{\mathrm{hom}},
\]
so that
\[
\mathbf P_{\mathbf Z_W}
\widehat{\mathbf d}_{\mathrm{hom}}
=
\mathbf Z_W\widehat{\boldsymbol\gamma}_W.
\]
\begin{assumption}[Window-level homogeneous adjusted first stage]
\label{ass:window-homogeneous-first-stage}
For the fixed window \(W\), there exists a nonzero scalar \(c_W\) such that
\[
\widehat{\boldsymbol\gamma}_W
=
c_W^{-1}\mathbf1_w.
\]
Equivalently,
\[
c_W\widehat{\boldsymbol\gamma}_W
=
\mathbf1_w.
\]
\end{assumption}

Let \(\mathcal T\subset\mathbb R\) be a compact set of candidate shift
levels. For each \(t\in\mathcal T\), define the window-specific common-shift
path by
\[
\boldsymbol\alpha_W^{(t)}
=
\boldsymbol\alpha_W^\star-t\mathbf 1_w,
\qquad
\beta_W^{(t)}
=
\beta^\star+c_Wt.
\]
Define the corresponding residual by
\[
\mathbf r_W^{(t)}
:=
\mathbf y
-
\mathbf Z_W\boldsymbol\alpha_W^{(t)}
-
\mathbf P_{\mathbf Z_W}
\widehat{\mathbf d}_{\mathrm{hom}}\,
\beta_W^{(t)}.
\]

The fitted value is constant along this path. Indeed,
\[
\begin{aligned}
&
\mathbf Z_W\boldsymbol\alpha_W^{(t)}
+
\mathbf P_{\mathbf Z_W}
\widehat{\mathbf d}_{\mathrm{hom}}\,
\beta_W^{(t)}
\\
&=
\mathbf Z_W
\left(
\boldsymbol\alpha_W^\star-t\mathbf 1_w
\right)
+
\mathbf Z_W\widehat{\boldsymbol\gamma}_W
\left(
\beta^\star+c_Wt
\right)
\\
&=
\mathbf Z_W\boldsymbol\alpha_W^\star
+
\mathbf P_{\mathbf Z_W}
\widehat{\mathbf d}_{\mathrm{hom}}\,
\beta^\star
+
t\mathbf Z_W
\left(
-\mathbf 1_w
+
c_W\widehat{\boldsymbol\gamma}_W
\right)
\\
&=
\mathbf Z_W\boldsymbol\alpha_W^\star
+
\mathbf P_{\mathbf Z_W}
\widehat{\mathbf d}_{\mathrm{hom}}\,
\beta^\star.
\end{aligned}
\]
Consequently,
\[
\mathbf r_W^{(t)}
=
\mathbf r_W^{(0)}
\qquad
\text{for every }t\in\mathcal T.
\]

\begin{assumption}[Off-path localization conditions]
\label{ass:window-sign-regularity-1}
The following conditions hold for the fixed window \(W\).

\begin{enumerate}
\item[(i)]
The window dimension \(w\) is fixed as \(n\to\infty\).

\item[(ii)]
There exists a fixed constant \(\kappa_{Z,W}>0\) such that
\[
\Pr\left\{
\lambda_{\min}\left(
\frac{1}{n}\mathbf Z_W^\top\mathbf Z_W
\right)
\ge
\kappa_{Z,W}
\right\}
\longrightarrow1.
\]

\item[(iii)]
The empirical window-specific design--residual score has the usual
root-\(n\) order:
\[
\left\|
\frac{1}{n}
\mathbf Z_W^\top\mathbf r_W^{(0)}
\right\|_2
=
O_p(n^{-1/2}).
\]
Because
\[
\mathbf r_W^{(t)}=\mathbf r_W^{(0)}
\qquad
\text{for every }t\in\mathcal T,
\]
this is equivalent to
\[
\sup_{t\in\mathcal T}
\left\|
\frac{1}{n}
\mathbf Z_W^\top\mathbf r_W^{(t)}
\right\|_2
=
O_p(n^{-1/2}).
\]

\item[(iv)]
The MCP tuning sequence satisfies
\[
\lambda_n\longrightarrow0,
\qquad
n\lambda_n^2\longrightarrow\infty,
\]
and the MCP concavity parameter \(\psi>1\) is fixed.
\end{enumerate}
\end{assumption}

\begin{lemma}[Exclusion of off-path deviations larger than the MCP scale]
\label{lem:offpath-localization}

Suppose Assumptions~\ref{ass:window-homogeneous-first-stage}
and~\ref{ass:window-sign-regularity-1} hold. For
\(t\in\mathcal T\), consider a perturbation
\[
\beta
=
\beta_W^{(t)}+\Delta_\beta,
\qquad
\boldsymbol\alpha_W
=
\boldsymbol\alpha_W^{(t)}+\boldsymbol\Delta_W,
\]
and define the identifiable off-path coordinate
\[
\mathbf h_W
:=
\boldsymbol\Delta_W
+
\widehat{\boldsymbol\gamma}_W\Delta_\beta.
\]
Then
\[
\mathbf Z_W\boldsymbol\Delta_W
+
\mathbf P_{\mathbf Z_W}
\widehat{\mathbf d}_{\mathrm{hom}}\,
\Delta_\beta
=
\mathbf Z_W\mathbf h_W.
\]

Let \(M>0\) satisfy
\[
M
>
\sqrt{
\frac{2w\psi}{\kappa_{Z,W}}
}.
\]
Then
\[
\Pr\left[
\inf_{t\in\mathcal T}
\inf_{\substack{
\boldsymbol\Delta_W\in\mathbb R^w,\,
\Delta_\beta\in\mathbb R:\\
\|
\boldsymbol\Delta_W
+
\widehat{\boldsymbol\gamma}_W\Delta_\beta
\|_2
\ge
M\lambda_n
}}
\left\{
Q_{n,W}\left(
\boldsymbol\alpha_W^{(t)}+\boldsymbol\Delta_W,\,
\beta_W^{(t)}+\Delta_\beta
\right)
-
Q_{n,W}\left(
\boldsymbol\alpha_W^{(t)},
\beta_W^{(t)}
\right)
\right\}
>0
\right]
\longrightarrow1.
\]

Thus, uniformly over the candidate shift levels, every identifiable
off-path perturbation of magnitude at least \(M\lambda_n\) has larger
window-specific MCP objective than the corresponding point on the
common-shift path.
\end{lemma}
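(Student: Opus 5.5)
The identity $\mathbf Z_W\boldsymbol\Delta_W+\mathbf P_{\mathbf Z_W}\widehat{\mathbf d}_{\mathrm{hom}}\Delta_\beta=\mathbf Z_W\mathbf h_W$ is immediate from $\mathbf P_{\mathbf Z_W}\widehat{\mathbf d}_{\mathrm{hom}}=\mathbf Z_W\widehat{\boldsymbol\gamma}_W$, so the work is in the objective difference. I would split that difference into a loss part and a penalty part and bound each uniformly in $t$ and in the perturbation.

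\textbf{Loss part.} Since the residual at the path point equals $\mathbf r_W^{(t)}=\mathbf r_W^{(0)}$ and the perturbation changes fitted values by $\mathbf Z_W\mathbf h_W$, I expand the square:
\[
\frac{1}{2n}\|\mathbf r_W^{(0)}-\mathbf Z_W\mathbf h_W\|_2^2-\frac{1}{2n}\|\mathbf r_W^{(0)}\|_2^2
=-\mathbf h_W^\top\Bigl(\tfrac1n\mathbf Z_W^\top\mathbf r_W^{(0)}\Bigr)+\tfrac12\mathbf h_W^\top\Bigl(\tfrac1n\mathbf Z_W^\top\mathbf Z_W\Bigr)\mathbf h_W .
\]
On the event of Assumption~\ref{ass:window-sign-regularity-1}(ii), together with the event $\|\frac1n\mathbf Z_W^\top\mathbf r_W^{(0)}\|_2\le Kn^{-1/2}$ from (iii), which has probability at least $1-\varepsilon$ for large $K$, this is at least
\[
\frac{\kappa_{Z,W}}{2}\|\mathbf h_W\|_2^2-Kn^{-1/2}\|\mathbf h_W\|_2 .
\]
These events do not depend on $t$ or on the perturbation, so the bound is automatically uniform.

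\textbf{Penalty part.} MCP satisfies $0\le p_{\lambda,\psi}\le \tfrac12\psi\lambda^2$, so the change in $\sum_{j\in W}p_{\lambda_n,\psi}(\alpha_{W,j})$ is at least $-\tfrac12 w\psi\lambda_n^2$. This holds however large $\boldsymbol\Delta_W$ is, even though $\boldsymbol\Delta_W$ itself is unconstrained (only $\mathbf h_W$ is). This boundedness is the key point, and it is why only the identifiable coordinate $\mathbf h_W$ needs controlling.

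\textbf{Combining.} With $x=\|\mathbf h_W\|_2\ge M\lambda_n$, the total difference is at least
\[
g(x)=\tfrac{\kappa_{Z,W}}{2}x^2-Kn^{-1/2}x-\tfrac12 w\psi\lambda_n^2 .
\]
Since $g$ is increasing for $x\ge Kn^{-1/2}/\kappa_{Z,W}$, and $n^{-1/2}=o(\lambda_n)$ by (iv), for large $n$ it suffices to check $x=M\lambda_n$:
\[
g(M\lambda_n)=\lambda_n^2\Bigl[\tfrac{\kappa_{Z,W}}{2}M^2-\tfrac12 w\psi-KM\,(n^{1/2}\lambda_n)^{-1}\Bigr].
\]
The last term vanishes as $n\to\infty$. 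The bracket is therefore eventually positive exactly when $M^2>w\psi/\kappa_{Z,W}$, which is implied by the assumed $M^2>2w\psi/\kappa_{Z,W}$; the extra factor of $2$ gives slack. Letting $\varepsilon\to0$ gives probability tending to one.

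\textbf{Main obstacle.} The delicate step is uniformity over $t\in\mathcal T$ and over the non-identifiable directions (any $\Delta_\beta$ with $\boldsymbol\Delta_W$ compensating). This is handled by (a) residual invariance along the path, which makes the loss bound $t$-free, and (b) using the global bound on the MCP penalty rather than a Lipschitz bound in $\boldsymbol\Delta_W$, which would fail when $\boldsymbol\Delta_W$ is large.
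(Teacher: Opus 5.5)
Your proposal is correct and follows essentially the same route as the paper's proof. Both arguments expand the loss along the invariant residual $\mathbf r_W^{(t)}=\mathbf r_W^{(0)}$, bound the quadratic and cross terms via the eigenvalue condition and Cauchy--Schwarz, and control the penalty through the global bound $0\le p_{\lambda_n,\psi}\le \psi\lambda_n^2/2$ regardless of the size of $\boldsymbol\Delta_W$. The only difference is in handling the cross term: the paper absorbs it into half of the quadratic term, which is where the factor $2$ in $M^2>2w\psi/\kappa_{Z,W}$ comes from, whereas your monotonicity argument for $g$ keeps the full quadratic and shows that $M^2>w\psi/\kappa_{Z,W}$ already suffices.
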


\begin{proof}
Fix \(t\in\mathcal T\), and consider a perturbation \[
\beta
=
\beta_W^{(t)}+\Delta_\beta,
\qquad
\boldsymbol\alpha_W
=
\boldsymbol\alpha_W^{(t)}+\boldsymbol\Delta_W.
\]
The residual at the perturbed parameter value is
\[
\begin{aligned}
\mathbf y
-
\mathbf Z_W
\bigl(
\boldsymbol\alpha_W^{(t)}+\boldsymbol\Delta_W
\bigr)
-
\mathbf P_{\mathbf Z_W}\widehat{\mathbf d}_{\mathrm{hom}}
\bigl(
\beta_W^{(t)}+\Delta_\beta
\bigr)
&=
\mathbf r_W^{(t)}
-
\mathbf Z_W\boldsymbol\Delta_W
-
\mathbf P_{\mathbf Z_W}\widehat{\mathbf d}_{\mathrm{hom}}\,\Delta_\beta
\\
&=
\mathbf r_W^{(t)}-\mathbf Z_W\mathbf h_W.
\end{aligned}
\]

Expanding the squared-error part of the objective therefore gives
\[
\begin{aligned}
&
\frac{1}{2n}
\left\|
\mathbf r_W^{(t)}-\mathbf Z_W\mathbf h_W
\right\|_2^2
-
\frac{1}{2n}
\left\|
\mathbf r_W^{(t)}
\right\|_2^2
\\
&\qquad
=
\frac{1}{2n}
\|\mathbf Z_W\mathbf h_W\|_2^2
-
\frac{1}{n}
\left\langle
\mathbf r_W^{(t)},\mathbf Z_W\mathbf h_W
\right\rangle
\\
&\qquad
=
\frac{1}{2n}
\|\mathbf Z_W\mathbf h_W\|_2^2
-
\mathbf h_W^\top
\frac{\mathbf Z_W^\top\mathbf r_W^{(t)}}{n}.
\end{aligned}
\]

Consequently,
\[
\begin{aligned}
&
Q_{n,W}\left(
\boldsymbol\alpha_W^{(t)}+\boldsymbol\Delta_W,\,
\beta_W^{(t)}+\Delta_\beta
\right)
-
Q_{n,W}\left(
\boldsymbol\alpha_W^{(t)},\beta_W^{(t)}
\right)
\\
&\quad
=
\frac{1}{2n}
\|\mathbf Z_W\mathbf h_W\|_2^2
-
\mathbf h_W^\top
\frac{\mathbf Z_W^\top\mathbf r_W^{(t)}}{n}
\\
&\qquad\quad
+
\sum_{j\in W}
\left[
p_{\lambda_n,\psi}
\left(
\alpha_{W,j}^{(t)}+\Delta_{W,j}
\right)
-
p_{\lambda_n,\psi}
\left(
\alpha_{W,j}^{(t)}
\right)
\right].
\end{aligned}
\]

By the lower-eigenvalue condition,
\[
\frac{1}{2n}
\|\mathbf Z_W\mathbf h_W\|_2^2
\ge
\frac{\kappa_{Z,W}}{2}
\|\mathbf h_W\|_2^2
\]
with probability tending to one.

Next, by Cauchy--Schwarz,
\[
-
\mathbf h_W^\top
\frac{\mathbf Z_W^\top\mathbf r_W^{(t)}}{n}
\ge
-
\|\mathbf h_W\|_2
\left\|
\frac{\mathbf Z_W^\top\mathbf r_W^{(t)}}{n}
\right\|_2.
\]

The MCP penalty is nonnegative and bounded above by
\(\psi\lambda_n^2/2\). Hence, for every coordinate,
\[
p_{\lambda_n,\psi}
\left(
\alpha_{W,j}^{(t)}+\Delta_{W,j}
\right)
-
p_{\lambda_n,\psi}
\left(
\alpha_{W,j}^{(t)}
\right)
\ge
-\frac{\psi}{2}\lambda_n^2.
\]
Summing over the \(w=|W|\) coordinates gives
\[
\sum_{j\in W}
\left[
p_{\lambda_n,\psi}
\left(
\alpha_{W,j}^{(t)}+\Delta_{W,j}
\right)
-
p_{\lambda_n,\psi}
\left(
\alpha_{W,j}^{(t)}
\right)
\right]
\ge
-\frac{w\psi}{2}\lambda_n^2.
\]

Combining these inequalities gives
\[
\begin{aligned}
&
Q_{n,W}\bigl(
\boldsymbol\alpha_W^{(t)}+\boldsymbol\Delta_W,\,
\beta_W^{(t)}+\Delta_\beta
\bigr)
-
Q_{n,W}\bigl(
\boldsymbol\alpha_W^{(t)},\beta_W^{(t)}
\bigr)
\\
&\qquad
\ge
\frac{\kappa_{Z,W}}{2}
\|\mathbf h_W\|_2^2
-
\|\mathbf h_W\|_2
\left\|
\frac{\mathbf Z_W^\top\mathbf r_W^{(t)}}{n}
\right\|_2
-
\frac{w\psi}{2}\lambda_n^2.
\end{aligned}
\]

Define
\[
B_{n,W}
:=
\sup_{t\in\mathcal T}
\left\|
\frac{\mathbf Z_W^\top\mathbf r_W^{(t)}}{n}
\right\|_2.
\]
By Assumption~\ref{ass:window-sign-regularity-1}(iii),
\[
B_{n,W}=O_p(n^{-1/2}).
\]

Because
\[
n\lambda_n^2\longrightarrow\infty,
\]
we have
\[
\frac{n^{-1/2}}{\lambda_n}
=
\frac{1}{\sqrt n\,\lambda_n}
\longrightarrow0,
\]
and therefore
\[
\frac{B_{n,W}}{\lambda_n}=o_p(1).
\]

It follows that
\[
\Pr\left\{
B_{n,W}
\le
\frac{\kappa_{Z,W}M}{4}\lambda_n
\right\}
\longrightarrow1.
\]
On this event, for every \(\mathbf h_W\) satisfying
\[
\|\mathbf h_W\|_2\ge M\lambda_n,
\]
we have
\[
B_{n,W}
\le
\frac{\kappa_{Z,W}M}{4}\lambda_n
\le
\frac{\kappa_{Z,W}}{4}\|\mathbf h_W\|_2.
\]
Hence
\[
\|\mathbf h_W\|_2B_{n,W}
\le
\frac{\kappa_{Z,W}}{4}\|\mathbf h_W\|_2^2.
\]

Substituting this bound into the objective inequality gives
\[
\begin{aligned}
&
Q_{n,W}\bigl(
\boldsymbol\alpha_W^{(t)}+\boldsymbol\Delta_W,\,
\beta_W^{(t)}+\Delta_\beta
\bigr)
-
Q_{n,W}\bigl(
\boldsymbol\alpha_W^{(t)},\beta_W^{(t)}
\bigr)
\\
&\qquad
\ge
\frac{\kappa_{Z,W}}{4}\|\mathbf h_W\|_2^2
-
\frac{w\psi}{2}\lambda_n^2
\\
&\qquad
\ge
\left(
\frac{\kappa_{Z,W}M^2}{4}
-
\frac{w\psi}{2}
\right)
\lambda_n^2.
\end{aligned}
\]

By the assumed choice of \(M\),
\[
\frac{\kappa_{Z,W}M^2}{4}
-
\frac{w\psi}{2}
>0.
\]
The lower bound is therefore strictly positive uniformly over
\(t\in\mathcal T\) and all perturbations satisfying
\[
\left\|
\boldsymbol\Delta_W+
\widehat{\boldsymbol\gamma}_W \Delta_\beta
\right\|_2
\ge M\lambda_n
\]
on an event whose probability tends to one. This proves the result.
\end{proof}
\begin{assumption}[Valid-set and sign-separation conditions]
\label{ass:window-sign-regularity-2}
\begin{enumerate}
\item[(vi)]
Let
\[
G_W^\star
:=
W\cap G^\star
\]
denote the truly valid indices contained in window \(W\). The fixed window
contains at least one truly valid instrument:
\[
G_W^\star\ne\emptyset.
\]

\item[(vii)]
Let
\[
\left(
\widehat{\boldsymbol\alpha}_W,
\widehat\beta_W
\right)
\]
be a measurable global minimizer of \(Q_{n,W}\), and define its induced
shift level by
\[
\widehat t_W
:=
\frac{\widehat\beta_W-\beta^\star}{c_W}.
\]
The selected shift lies in the candidate shift set with probability
tending to one:
\[
\Pr\left(
\widehat t_W\in\mathcal T
\right)
\longrightarrow1.
\]

\item[(viii)]
For a fixed tolerance \(\alpha_{\mathrm{tol}}>0\), the selected shift is
asymptotically separated from the sign-classification boundaries. There
exists a constant \(\delta>0\) such that
\[
\Pr\left\{
\left|
|\widehat t_W|-\alpha_{\mathrm{tol}}
\right|
>\delta
\right\}
\longrightarrow1.
\]
\end{enumerate}

\end{assumption}
\begin{theorem}[Window-level sign coherence of truly valid indices]
\label{thm:window-valid-sign-coherence}

Suppose Assumptions~\ref{ass:window-homogeneous-first-stage},
\ref{ass:window-sign-regularity-1}, and
\ref{ass:window-sign-regularity-2} hold. Fix
\[
M>
\sqrt{\frac{2w\psi}{\kappa_{Z,W}}}.
\]

Define
\[
\operatorname{sign}_{\alpha_{\mathrm{tol}}}(x)
=
\begin{cases}
-1, & x<-\alpha_{\mathrm{tol}},\\
0, & |x|\le\alpha_{\mathrm{tol}},\\
1, & x>\alpha_{\mathrm{tol}},
\end{cases}
\]
and
\[
s^\star(W)
:=
\operatorname{sign}_{\alpha_{\mathrm{tol}}}
\left(-\widehat t_W\right).
\]
Then
\[
\Pr\left\{
\operatorname{sign}_{\alpha_{\mathrm{tol}}}
\left(\widehat\alpha_{W,j}\right)
=
s^\star(W)
\text{ for every }j\in G_W^\star
\right\}
\longrightarrow1.
\]
Equivalently, for every pair
\(j,k\in G^\star_W\),
\[
\Pr\left\{
\operatorname{sign}_{\alpha_{\mathrm{tol}}}
(\widehat\alpha_{W,j})
\neq
\operatorname{sign}_{\alpha_{\mathrm{tol}}}
(\widehat\alpha_{W,k})
\right\}
\longrightarrow0.
\]
Thus, within any fixed window, all truly valid indices asymptotically receive
the same sign-with-tolerance classification, regardless of whether the
selected MCP solution remains near the oracle configuration or shifts
toward an invalid-instrument cluster.
\end{theorem}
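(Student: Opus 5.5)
The plan is to combine the off-path localization of Lemma~\ref{lem:offpath-localization} with the identity that fixes the shift level. We write the global minimizer $(\widehat{\boldsymbol\alpha}_W,\widehat\beta_W)$ as a perturbation of the common-shift path point at $t=\widehat t_W$:
\[
\widehat\beta_W=\beta_W^{(\widehat t_W)}+\Delta_\beta,
\qquad
\widehat{\boldsymbol\alpha}_W=\boldsymbol\alpha_W^{(\widehat t_W)}+\boldsymbol\Delta_W .
\]
Because $\widehat t_W=(\widehat\beta_W-\beta^\star)/c_W$, we have $\beta_W^{(\widehat t_W)}=\beta^\star+c_W\widehat t_W=\widehat\beta_W$. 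Hence $\Delta_\beta=0$ exactly, and the identifiable off-path coordinate reduces to $\mathbf h_W=\boldsymbol\Delta_W=\widehat{\boldsymbol\alpha}_W-\boldsymbol\alpha_W^\star+\widehat t_W\mathbf 1_w$.

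\textbf{Step 1 (localization).} On the event $\{\widehat t_W\in\mathcal T\}$, which has probability tending to one by Assumption~\ref{ass:window-sign-regularity-2}(vii), we argue by contradiction. If $\|\mathbf h_W\|_2\ge M\lambda_n$, then Lemma~\ref{lem:offpath-localization}, which is uniform over $t\in\mathcal T$, gives
\[
Q_{n,W}(\widehat{\boldsymbol\alpha}_W,\widehat\beta_W)>Q_{n,W}\bigl(\boldsymbol\alpha_W^{(\widehat t_W)},\beta_W^{(\widehat t_W)}\bigr)
\]
with probability tending to one. This contradicts global minimality. Intersecting the two events, we conclude that $\|\mathbf h_W\|_2<M\lambda_n$ with probability tending to one. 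The uniformity of the lemma over $t$ is what allows us to plug in the data-dependent value $\widehat t_W$. This is the step needing the most care, although it is already packaged in the lemma.

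\textbf{Step 2 (coordinatewise consequence).} For $j\in G_W^\star$ we have $\alpha_j^\star=0$. Therefore
\[
\widehat\alpha_{W,j}=-\widehat t_W+h_{W,j},
\qquad
|h_{W,j}|\le\|\mathbf h_W\|_2<M\lambda_n .
\]
Since $\lambda_n\to0$, we may take $n$ large enough that $M\lambda_n<\delta$. Thus, with probability tending to one, $|\widehat\alpha_{W,j}-(-\widehat t_W)|<\delta$ holds simultaneously for all $j\in G_W^\star$. This set is nonempty by Assumption~\ref{ass:window-sign-regularity-2}(vi) and finite because $w$ is fixed.

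\textbf{Step 3 (sign stability).} Intersect with the event from Assumption~\ref{ass:window-sign-regularity-2}(viii), on which $-\widehat t_W$ lies more than $\delta$ away from both boundaries $\pm\alpha_{\mathrm{tol}}$. The function $\operatorname{sign}_{\alpha_{\mathrm{tol}}}$ is constant on each of the intervals $(-\infty,-\alpha_{\mathrm{tol}})$, $[-\alpha_{\mathrm{tol}},\alpha_{\mathrm{tol}}]$ and $(\alpha_{\mathrm{tol}},\infty)$. A perturbation of size less than $\delta$ therefore cannot cross a boundary, and
\[
\operatorname{sign}_{\alpha_{\mathrm{tol}}}(\widehat\alpha_{W,j})=\operatorname{sign}_{\alpha_{\mathrm{tol}}}(-\widehat t_W)=s^\star(W)
\qquad\text{for every } j\in G_W^\star .
\]
A finite intersection of events with probability tending to one again has probability tending to one, which gives the first claim. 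The pairwise statement then follows, since disagreement between $j$ and $k$ requires at least one of them to differ from $s^\star(W)$, and a union bound over the finitely many pairs completes the argument.
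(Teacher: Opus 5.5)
Your proposal is correct and follows essentially the same route as the paper: setting $\beta_W^{(\widehat t_W)}=\widehat\beta_W$ forces $\Delta_\beta=0$, Lemma~\ref{lem:offpath-localization} then gives $\|\widehat{\boldsymbol\alpha}_W-\boldsymbol\alpha_W^{(\widehat t_W)}\|_2<M\lambda_n$, each valid coordinate satisfies $\widehat\alpha_{W,j}\approx-\widehat t_W$, and the $\delta$-separation from the tolerance boundaries finishes the argument. Your version is slightly more careful than the paper's, because you explicitly intersect with the event $\{\widehat t_W\in\mathcal T\}$ from Assumption~\ref{ass:window-sign-regularity-2}(vii) and spell out the contradiction with global minimality, both of which the paper leaves implicit.
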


\begin{proof}
Because
\[
\widehat\beta_W
=
\beta_W^{(\widehat t_W)},
\]
the global minimizer can be represented relative to the corresponding
shift-path point as
\[
\widehat\beta_W
=
\beta_W^{(\widehat t_W)}+\Delta_\beta
\]
with
\[
\Delta_\beta=0.
\]
Consequently, the identifiable off-path coordinate from
Lemma~\ref{lem:offpath-localization} reduces to
\[
\mathbf h_W
=
\widehat{\boldsymbol\alpha}_W
-
\boldsymbol\alpha_W^{(\widehat t_W)}.
\]

The off-path localization lemma therefore gives
\[
\Pr\left\{
\left\|
\widehat{\boldsymbol\alpha}_W
-
\boldsymbol\alpha_W^{(\widehat t_W)}
\right\|_2
<
M\lambda_n
\right\}
\longrightarrow1.
\]
Since the coordinatewise norm is bounded by the Euclidean norm,
\[
\max_{j\in W}
\left|
\widehat\alpha_{W,j}
-
\alpha_{W,j}^{(\widehat t_W)}
\right|
<
M\lambda_n
\]
with probability tending to one.

For every truly valid index \(j\in  G^\star_W\),
\[
\alpha_j^\star=0,
\]
and hence
\[
\alpha_{W,j}^{(\widehat t_W)}
=
\alpha_j^\star-\widehat t_W
=
-\widehat t_W.
\]
It follows that
\[
\max_{j\in  G^\star_W}
\left|
\widehat\alpha_{W,j}
+
\widehat t_W
\right|
<
M\lambda_n
\]
with probability tending to one.

Because \(M\) is fixed and \(\lambda_n\to0\), eventually
\[
M\lambda_n<\delta.
\]
On the event
\[
\left|
|\widehat t_W|-\alpha_{\mathrm{tol}}
\right|
>\delta,
\]
the common value \(-\widehat t_W\) lies more than \(\delta\) from the
nearest sign-classification boundary.
Therefore, every number within \(M\lambda_n\) of
\(-\widehat t_W\) has the same sign-with-tolerance classification as
\(-\widehat t_W\).

Hence,
\[
\operatorname{sign}_{\alpha_{\mathrm{tol}}}
(\widehat\alpha_{W,j})
=
\operatorname{sign}_{\alpha_{\mathrm{tol}}}
(-\widehat t_W)
=
s^\star (W)
\]
simultaneously for every
\(j\in  G^\star_W\), with probability tending to one.
The pairwise statement follows immediately.
\end{proof}

\subsection{Implications for aggregation across windows}

Lemma~\ref{lem:offpath-localization} shows that a global minimizer lies,
with probability tending to one, within an \(O(\lambda_n)\) neighborhood
of the common-shift path. Theorem~\ref{thm:window-valid-sign-coherence}
then implies that all truly valid instruments appearing in a fixed window
receive the same sign-with-tolerance classification asymptotically.

Accordingly, a window may classify its valid instruments jointly as zero,
jointly as positive, or jointly as negative. In each case, the valid
instruments avoid zero--nonzero and opposite-sign discordance within that
window. Aggregating these patterns over overlapping windows therefore
provides evidence of their joint validity even when cluster-induced shifts
move the estimated valid coefficients away from zero.

\section{Additional Simulation Results}\label{supp:add-sim}

This section reports additional simulation results that supplement the main findings in Section~\ref{simulation}. Unless otherwise stated, the data-generating mechanism, sample size, number of candidate instruments, causal effect, TEAM-IV implementation, error distribution, and performance metric are the same as in the main simulation study. In particular, we use \(n=2000\), \(L=10\), \(\beta^\star=1\), and evaluate performance using the absolute estimation error \( |\widehat\beta-\beta^\star| \) over 1000 Monte Carlo replications per setting.

The additional simulations examine departures from the main design along three dimensions. First, we consider correlated candidate instruments using compound-symmetric and autoregressive covariance structures. Second, we consider mixed-sign invalid direct effects, where the nonzero entries of \(\boldsymbol\alpha^\star\) are evenly split between positive and negative signs. Third, we consider signed and heterogeneous first-stage coefficients, where the entries of \(\boldsymbol\gamma^\star\) are no longer constrained to be positive and similarly sized. These settings assess whether the relative performance of TEAM-IV is robust to correlation among instruments and to violations of the sign-coherence patterns used in the main simulation design.

\subsection{Mixed-sign invalid direct effects with autoregressive instrument correlation}
\begin{figure}[h!]
\begin{center}
\includegraphics[width=\textwidth]{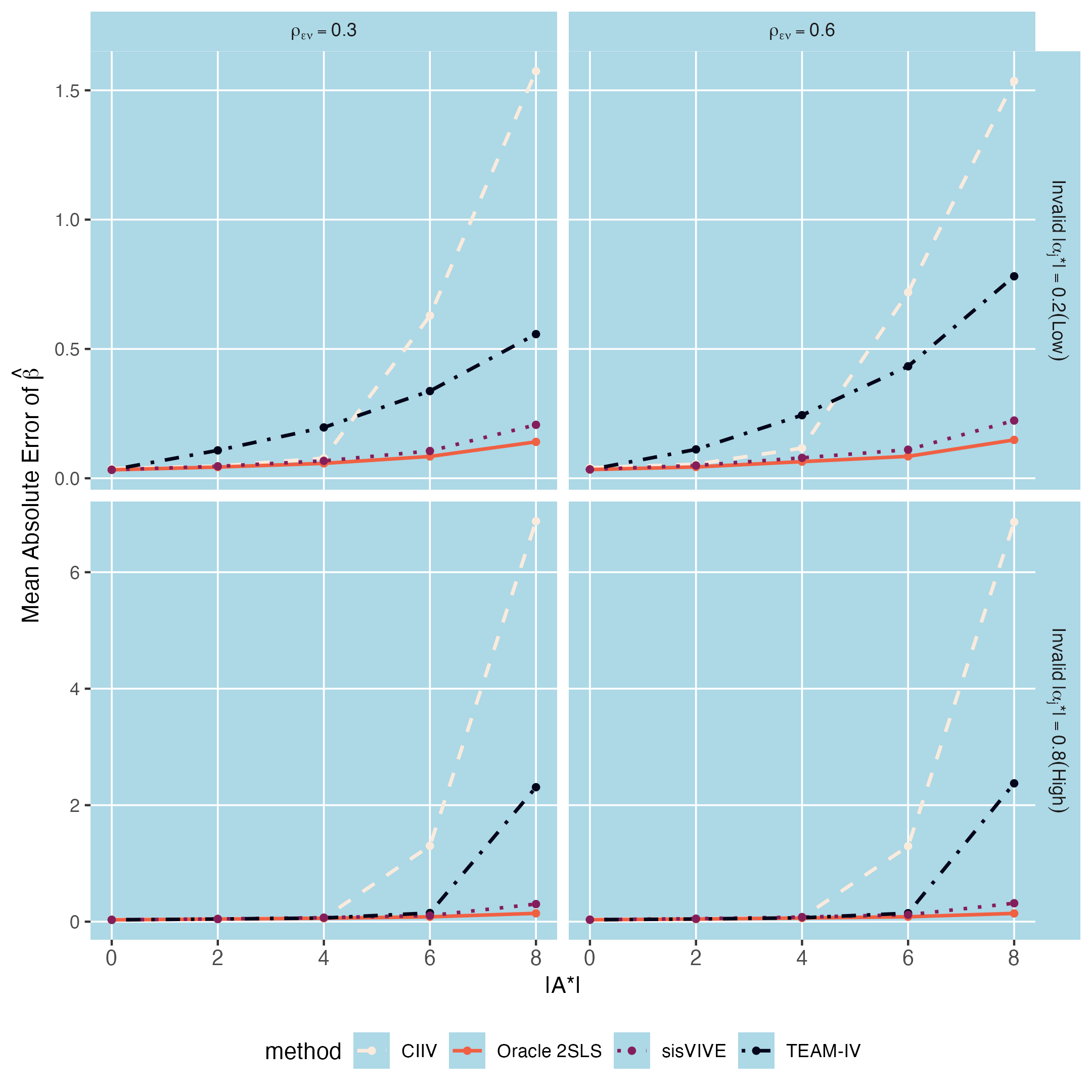}
\caption{Mean absolute error (MAE) of $\widehat\beta$ under the simulation design with autoregressive instrument covariance and invalid direct effects split evenly between positive and negative signs. Results are shown by the number of invalid instruments \(|A^\star|\), the endogeneity parameter \(\rho_{\epsilon\nu}\), and the invalid direct-effect magnitude \(a_\alpha\). Each point represents the average across 1000 Monte Carlo replications.}
\label{fig:alpmix-ar1-mae}
\end{center}
\end{figure}

Figure~\ref{fig:alpmix-ar1-mae} reports the mean absolute estimation error of $\widehat\beta$ when the candidate instruments have an autoregressive covariance structure and invalid direct effects are split evenly between positive and negative signs. This design departs from the main design in two dimensions: correlation structure of $\mathbf z_{i \cdot}$ and the sign incoherence of $\boldsymbol \alpha^\star$.

The aim of this crossed sensitivity analysis is to determine whether TEAM-IV performance degrades when these two departures occur in tandem.

The principal effect on TEAM-IV of using mixed-sign invalid direct effects simultaneously with autoregressive instrument covariance is worse performance when \(|A^\star|=8\) or \(a_\alpha=0.2\). As occurred under mixed-sign invalid direct effects with independent instruments, sisVIVE performs substantially better than under the main design.

When \(a_\alpha=0.2\), TEAM-IV performs worse than sisVIVE and oracle 2SLS across all nonzero levels of \(|A^\star|\), and worse than CIIV when \(|A^\star|\in\{2,4\}\). When \(|A^\star|=8\), CIIV error increases dramatically, coinciding with violation of the plurality rule when only two instruments are valid.

When $a_\alpha=0.8$, TEAM-IV error closely tracks that of oracle 2SLS for all $|A^\star| \leq 6$, but degrades moderately when $|A^\star| =8$. sisVIVE performs close to oracle 2SLS across all levels of $|A^\star|$, as it does in the mixed-sign invalid direct effect design under independent instruments. CIIV error again increases markedly when $|A^\star| =8$.

Overall, this two-departure design indicates a fair amount of robustness of TEAM-IV to combining autoregressive correlation in \(\mathbf Z\) with sign incoherence of \(\alpha_j^\star\) for \(j\in A^\star\). CIIV shows unexpectedly elevated error when \(|A^\star|\in\{6,8\}\). In the \(|A^\star|=6\) setting, the six invalid instruments are divided evenly between two nonzero direct-effect values, while the four valid instruments satisfy \(\alpha_j^\star=0\). Because CIIV error was consistently near the oracle 2SLS benchmark when \(|A^\star|=6\) under mixed-sign invalid direct effects and independent instruments, the increased error appears to arise from the interaction between the autoregressive covariance structure and the mixed-sign invalidity pattern.

\subsection{Heterogeneous signed first-stage coefficients with independent instruments}

In this sensitivity analysis, all features of the main simulation design are retained except for the distribution of the first-stage coefficients. Instead of generating positive coefficients from \(\operatorname{Unif}(0.08,0.14)\), we generate heterogeneous signed coefficients according to
\[
\gamma_j^\star
=
S_j A_j,
\qquad
S_j\in\{-1,1\},
\qquad
A_j\sim \operatorname{Unif}(0.03,0.19),
\]
with signs assigned independently across instruments. This design relaxes the positive, similarly sized first-stage coefficient structure used in the main simulations.

\begin{figure}[h!]
\begin{center}
\includegraphics[width=\textwidth]{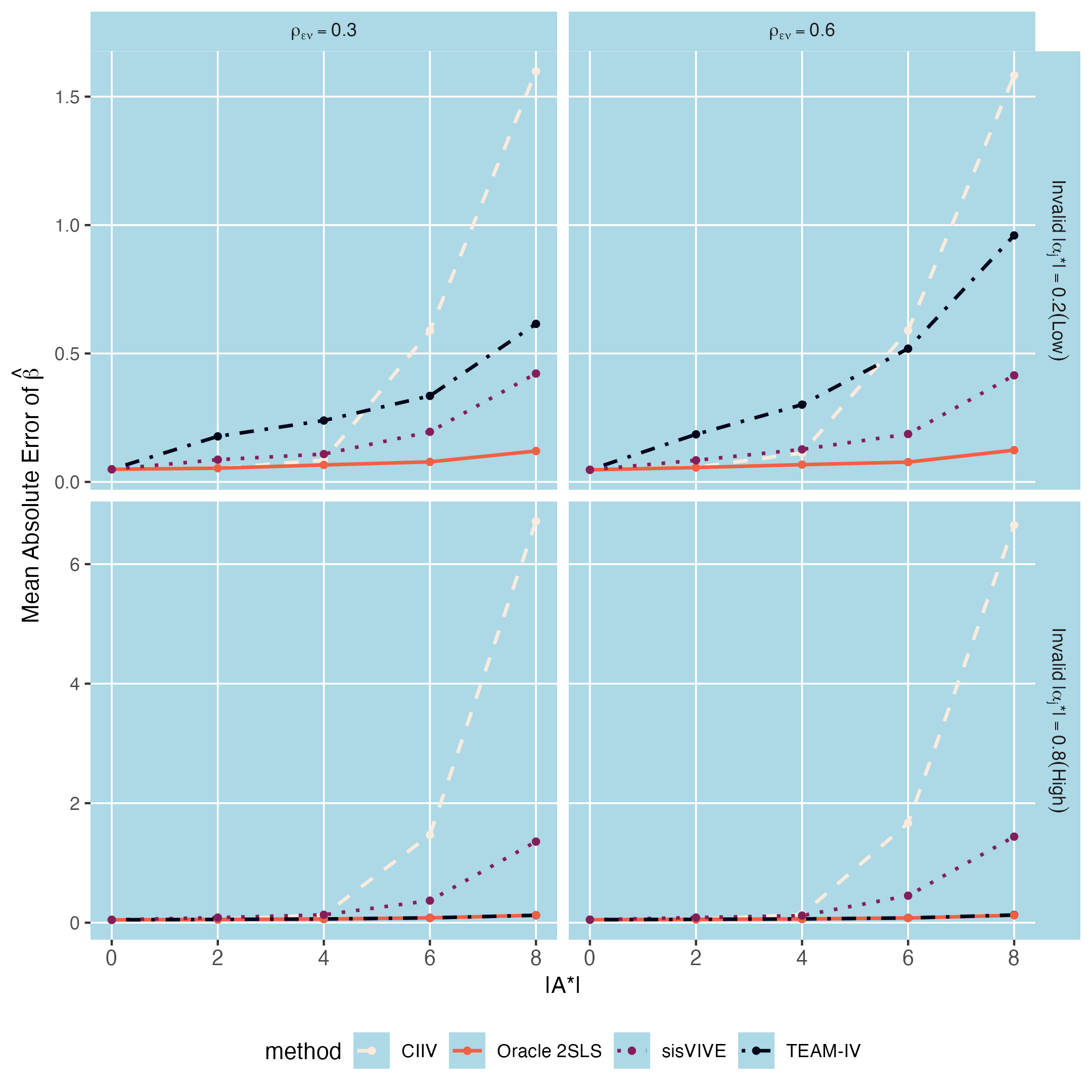}
\caption{Mean absolute error (MAE) of $\widehat\beta$ under the simulation design with independent instrument covariance, positive invalid direct effects, and heterogeneous, mixed-sign first-stage coefficients.  Results are shown by the number of invalid instruments \(|A^\star|\), the endogeneity parameter \(\rho_{\epsilon\nu}\), and the invalid direct-effect magnitude \(a_\alpha\). Each point represents the average across 1000 Monte Carlo replications.}
\label{fig:heterogam-ind-mae}
\end{center}
\end{figure}

Figure~\ref{fig:heterogam-ind-mae} reports the mean absolute estimation error of \(\widehat\beta\) when the first-stage coefficient vector \(\boldsymbol\gamma^\star\) is drawn from a more spread-out uniform probability distribution than in the main simulation design and is allowed to contain both positive and negative coefficients.

The aim of this sensitivity analysis is to determine whether TEAM-IV performance degrades when the first-stage coefficients \(\gamma_j^\star\) are mixed in sign and more widely distributed than in the main simulation design.

Overall, TEAM-IV appears robust to this more heterogeneous, mixed-sign \(\boldsymbol\gamma^\star\). The more noticeable difference is for sisVIVE, which exhibits a more gradual increase in error when the majority rule is violated. CIIV also has lower error than in the main design for the \(|A^\star|=6\) cases.

When \(a_\alpha=0.2\), TEAM-IV performs mildly worse than sisVIVE, and sisVIVE in turn performs mildly worse than oracle 2SLS, with these gaps appearing to grow with the size of the invalid set \(|A^\star|\). CIIV, in contrast, performs almost as well as oracle 2SLS when \(|A^\star|\in\{0,2,4\}\), but otherwise degrades strongly when invalid instruments are in the majority.

When \(a_\alpha=0.8\), sisVIVE continues to perform mildly worse than oracle 2SLS, as under \(a_\alpha=0.2\), but TEAM-IV attains performance nearly identical to oracle 2SLS. CIIV error again increases markedly when \(|A^\star|\in\{6,8\}\).

Overall, this design shows that TEAM-IV is robust to heterogeneity and sign incoherence in the first-stage coefficients \(\boldsymbol\gamma^\star\) under independent instruments. Compared with the main design, sisVIVE performs better under this heterogeneous, mixed-sign first-stage specification, while CIIV shows a similar pattern of degradation when invalid instruments constitute a majority.

\subsection{Heterogeneous signed first-stage coefficients with autoregressive instrument correlation}

\begin{figure}[h!]
\begin{center}
\includegraphics[width=\textwidth]{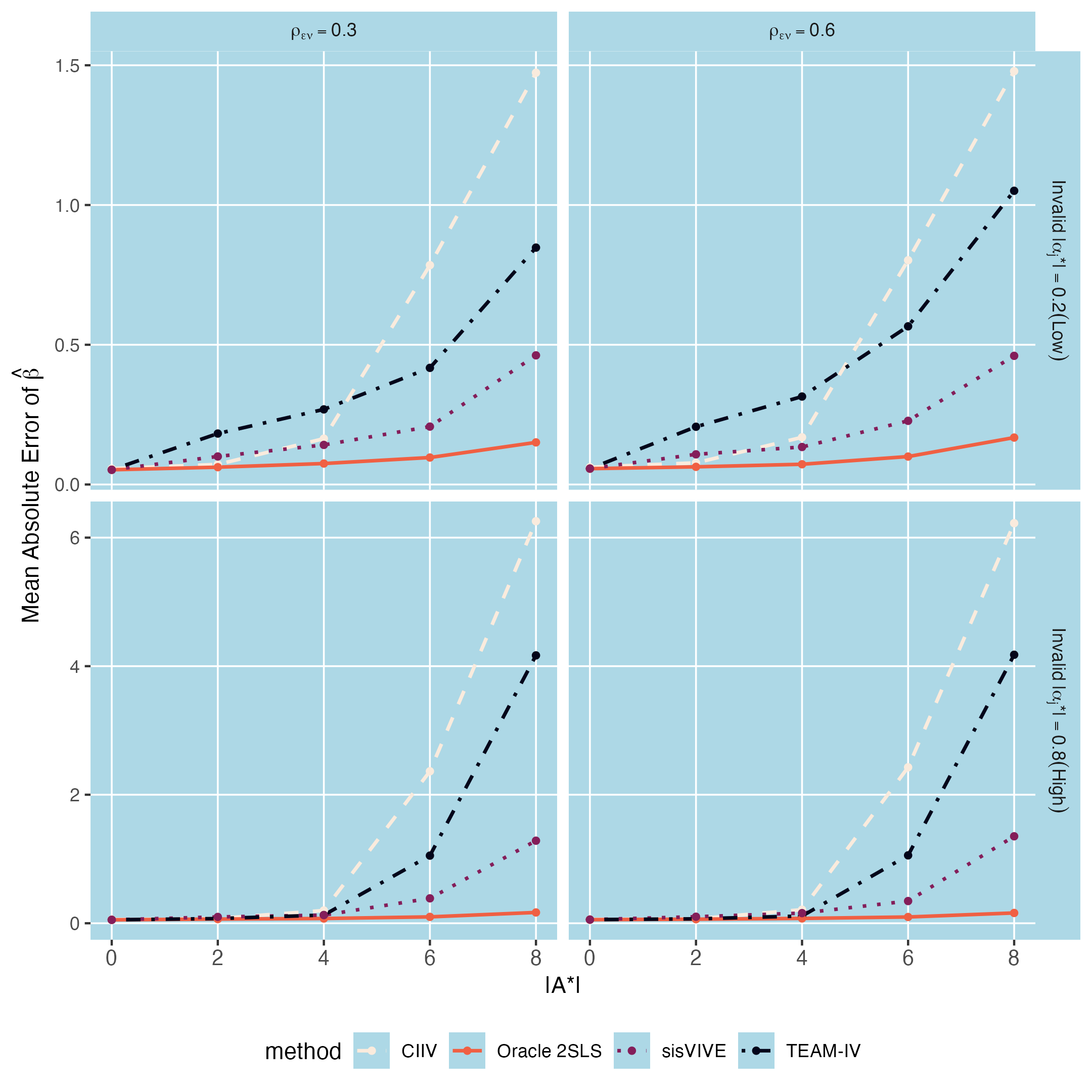}
\caption{Mean absolute estimation error of \(\widehat\beta\) under the simulation design with autoregressive instrument covariance, positive invalid direct effects, and heterogeneous, mixed-sign first-stage coefficients. Results are shown by the number of invalid instruments \(|A^\star|\), the endogeneity parameter \(\rho_{\epsilon\nu}\), and the invalid direct-effect magnitude \(a_\alpha\). A total of 1000 Monte Carlo replications were attempted for each setting; each point represents the average across 980--1000 usable replications, depending on the setting.}
\label{fig:heterogam-ar1-mae}
\end{center}
\end{figure}

Figure~\ref{fig:heterogam-ar1-mae} reports the mean absolute estimation error of \(\widehat\beta\) when the instruments have an autoregressive correlation structure and each first-stage coefficient satisfies
\[
\gamma_j^\star=S_jA_j,
\]
where
\[
S_j\overset{\mathrm{iid}}{\sim}\operatorname{Unif}\{-1,1\},
\qquad
A_j\overset{\mathrm{iid}}{\sim}\operatorname{Uniform}(0.03,0.19),
\qquad
j=1,\ldots,L,
\]
with \(S_j\) and \(A_j\) generated independently.

In a small number of Monte Carlo replications, TEAM-IV did not produce an estimate because the team-construction step returned no team containing more than one instrument. This occurred once when \(|A^\star|=4\), \(\rho_{\epsilon\nu}=0.3\), and \(a_\alpha=0.8\); 15 times when \(|A^\star|=6\), \(\rho_{\epsilon\nu}=0.3\), and \(a_\alpha=0.8\); 10 times when \(|A^\star|=6\), \(\rho_{\epsilon\nu}=0.6\), and \(a_\alpha=0.8\); 20 times when \(|A^\star|=8\), \(\rho_{\epsilon\nu}=0.3\), and \(a_\alpha=0.8\); and 19 times when \(|A^\star|=8\), \(\rho_{\epsilon\nu}=0.6\), and \(a_\alpha=0.8\). To preserve a common comparison set across methods, all methods were summarized in Figure~\ref{fig:heterogam-ar1-mae} using only replications for which TEAM-IV produced an estimate.

The aim of this crossed sensitivity analysis is to determine whether TEAM-IV performance degrades when the first-stage coefficients \(\gamma_j^\star\) are mixed in sign and more widely distributed while the candidate instruments have an autoregressive correlation structure. Overall, TEAM-IV performance does decline in this setting relative to sisVIVE and oracle 2SLS.

When \(a_\alpha=0.2\), TEAM-IV error exceeds that of sisVIVE and oracle 2SLS for all nonzero levels of \(|A^\star|\). When \(a_\alpha=0.6\), TEAM-IV performs similarly to sisVIVE when \(|A^\star|\in\{0,2,4\}\), but degrades relative to sisVIVE when \(|A^\star|\in\{6,8\}\).

CIIV appears to perform close to oracle 2SLS when \(|A^\star|\in\{0,2,4\}\), but its error increases rapidly as \(|A^\star|\) increases beyond 4.

Overall, this sensitivity analysis reveals a limitation of TEAM-IV. When the instruments possess an autoregressive correlation structure and the first-stage coefficients are heterogeneous and mixed in sign, TEAM-IV performs no better than sisVIVE in several settings and performs worse when invalid instruments constitute a majority.

\section*{Acknowledgments}
The author used OpenAI’s ChatGPT to assist with code development and editorial suggestions; all statistical methods, simulations, and conclusions are the author’s own. 

\bibliographystyle{plainnat}

 \end{document}